%%
%% This is file `sample-sigconf.tex',
%% generated with the docstrip utility.
%%
%% The original source files were:
%%
%% samples.dtx  (with options: `all,proceedings,bibtex,sigconf')
%% 
%% IMPORTANT NOTICE:
%% 
%% For the copyright see the source file.
%% 
%% Any modified versions of this file must be renamed
%% with new filenames distinct from sample-sigconf.tex.
%% 
%% For distribution of the original source see the terms
%% for copying and modification in the file samples.dtx.
%% 
%% This generated file may be distributed as long as the
%% original source files, as listed above, are part of the
%% same distribution. (The sources need not necessarily be
%% in the same archive or directory.)
%%
%%
%% Commands for TeXCount
%TC:macro \cite [option:text,text]
%TC:macro \citep [option:text,text]
%TC:macro \citet [option:text,text]
%TC:envir table 0 1
%TC:envir table* 0 1
%TC:envir tabular [ignore] word
%TC:envir displaymath 0 word
%TC:envir math 0 word
%TC:envir comment 0 0
%%
%% The first command in your LaTeX source must be the \documentclass
%% command.
%%
%% For submission and review of your manuscript please change the
%% command to \documentclass[manuscript, screen, review]{acmart}.
%%
%% When submitting camera ready or to TAPS, please change the command
%% to \documentclass[sigconf]{acmart} or whichever template is required
%% for your publication.
%%
%%

%% ArXiv
\documentclass[sigconf,nonacm]{acmart}
%% Conf
%%\documentclass[sigconf]{acmart}

%%
%% \BibTeX command to typeset BibTeX logo in the docs
\AtBeginDocument{%
  }

%% Rights management information.  This information is sent to you
%% when you complete the rights form.  These commands have SAMPLE
%% values in them; it is your responsibility as an author to replace
%% the commands and values with those provided to you when you
%% complete the rights form.
\setcopyright{acmlicensed}
\copyrightyear{2025}
\acmYear{2025}
\acmDOI{XXXXXXX.XXXXXXX}
%% These commands are for a PROCEEDINGS abstract or paper.
\acmConference[PPDP'25]{The 27th International Symposium on Principles and
Practice of Declarative Programming}{September 10--11, 2025}{Rende, Italy}
%%
%%  Uncomment \acmBooktitle if the title of the proceedings is different
%%  from ``Proceedings of ...''!
%%
%%\acmBooktitle{Woodstock '18: ACM Symposium on Neural Gaze Detection,
%%  June 03--05, 2018, Woodstock, NY}
\acmISBN{978-1-4503-XXXX-X/2018/06}

%%
%% Submission ID.
%% Use this when submitting an article to a sponsored event. You'll
%% receive a unique submission ID from the organizers
%% of the event, and this ID should be used as the parameter to this command.
%%\acmSubmissionID{123-A56-BU3}

%%
%% For managing citations, it is recommended to use bibliography
%% files in BibTeX format.
%%
%% You can then either use BibTeX with the ACM-Reference-Format style,
%% or BibLaTeX with the acmnumeric or acmauthoryear sytles, that include
%% support for advanced citation of software artefact from the
%% biblatex-software package, also separately available on CTAN.
%%
%% Look at the sample-*-biblatex.tex files for templates showcasing
%% the biblatex styles.
%%

%%
%% The majority of ACM publications use numbered citations and
%% references.  The command \citestyle{authoryear} switches to the
%% "author year" style.
%%
%% If you are preparing content for an event
%% sponsored by ACM SIGGRAPH, you must use the "author year" style of
%% citations and references.
%% Uncommenting
%% the next command will enable that style.
%%\citestyle{acmauthoryear}

% \usepackage{lmodern} % to avoid error "(font expansion): auto expansion is only possible with scalable fonts."
\usepackage{proof}
\usepackage{mathtools}
\usepackage{xfrac}
\usepackage{stmaryrd}
\usepackage[all,cmtip]{xy}
\usepackage{listings}
\lstset{
language=C,
showstringspaces=false,
keywordstyle=\color{blue},
basicstyle=\fontsize{08.5}{10}\ttfamily,
emph={diff,exit,blue,unif,then,wait,bernoulli,exp,normal,sqrt,cos},emphstyle=\color{blue},
breaklines=true,
escapeinside={*@}{*@}
}

%!TeX root = sample-sigconf.tex

\makeatletter
\DeclareRobustCommand{\iscircle}{\mathord{\mathpalette\is@circle\relax}}
\newcommand\is@circle[2]{%
  \begingroup
  \sbox\z@{\raisebox{\depth}{$\m@th#1\bigcirc$}}%
  \sbox\tw@{$#1\square$}%
  \resizebox{!}{\ht\tw@}{\usebox{\z@}}%
  \endgroup
}
\makeatother

%%%% Categories
\newcommand{\catfont}[1]{\mathsf{#1}}

\newcommand{\catC}{\catfont{C}}

\newcommand{\CTop}{\catfont{Top}}

\newcommand{\Subs}[2]{\catfont{Sub}_{}}

\newcommand{\Meas}{\catfont{Meas}}
\newcommand{\PMeas}{\catfont{PMeas}}

%%%% General functors

%%%% Particular kinds of functors
\newcommand{\sfunfont}[1]{\mathrm{#1}}

\newcommand{\Giry}{\sfunfont{G}}
\newcommand{\Measu}{\sfunfont{M}}

%%%% Diagram functors

%%%% Monads

\newcommand{\klcomp}{\star}

%%%% Adjunctions
\newcommand\adjunct[2]{\xymatrix@=8ex{\ar@{}[r]|{\top}\ar@<1mm>@/^2mm/[r]^{{#2}}
& \ar@<1mm>@/^2mm/[l]^{{#1}}}}
\newcommand\adjunctop[2]{\xymatrix@=8ex{\ar@{}[r]|{\bot}\ar@<1mm>@/^2mm/[r]^{{#2}}
& \ar@<1mm>@/^2mm/[l]^{{#1}}}}

%%%% Retractions
\newcommand\retract[2]{\xymatrix@=8ex{\ar@{}[r]|{}\ar@<1mm>@/^2mm/@{^{(}->}[r]^{{#2}}
& \ar@<1mm>@/^2mm/@{->>}[l]^{{#1}}}}

%%%% Limits

%%%% Colimits

\newcommand{\inl}{\mathrm{inl}}
\newcommand{\inr}{\mathrm{inr}}

%%%% Distributive categories

%%%% Closedness

\newcommand{\eval}{\mathrm{eval}}

%%%% Misc. operations

\newcommand{\comp}{\cdot}

\newcommand{\dist}{\mathrm{dist}}
%%%% Factorisations

%%%% Operations

\newcommand{\sem}[1]{\left \llbracket #1 \right \rrbracket}
\newcommand{\asem}[1]{ \llparenthesis #1 \rrparenthesis}

%%%% Sets of numbers

\newcommand{\Reals}{\mathbb{R}}

\newcommand{\Rz}{\Reals_{\geq 0}}

%%%% Writing

\newcommand{\ie}{\emph{i.e.}}
\newcommand{\eg}{\emph{e.g.}}

%%%% Sets of terms

%%%% Types

%%%% RuleName

%%%% Sequents

% Program font

%%%% Contextual equivalence

%%%% Domain theory

\newcommand{\wrap}[1]{\begin{tabular}{@{}c@{}}#1\end{tabular}}

\newcommand{\ssto}[1][]{~\to^{#1}~}
\newcommand{\bsto}{~\Downarrow~}

\newcommand{\skp}{\mathit{skip}}
\newcommand{\err}{\mathit{err}}
\newcommand{\sep}{\kern1pt , \kern-1pt}
\newcommand{\pmap}{ \xrightharpoonup{\hspace{0.1cm}} }
\newcommand{\nline}{\vspace{-3mm}}
\newcommand{\prem}[1]{(#1)}

\newcommand{\lrule}[3]{\textbf{#1}\quad\frac{#2}{#3}}

%%
%% end of the preamble, start of the body of the document source.
\begin{document}

%%
%% The "title" command has an optional parameter,
%% allowing the author to define a "short title" to be used in page headers.
\title{An Adequate While-Language for Stochastic Hybrid Computation}

%%
%% The "author" command and its associated commands are used to define
%% the authors and their affiliations.
%% Of note is the shared affiliation of the first two authors, and the
%% "authornote" and "authornotemark" commands
%% used to denote shared contribution to the research.
\author{Renato Neves}
\email{nevrenato@di.uminho.pt}
\affiliation{%
        \institution{University of Minho \& INESC-TEC}
  \city{Braga}
  \country{Portugal}
}

\author{José Proença}
\email{jose.proenca@fc.up.pt}
\affiliation{%
  \institution{University of Porto \& CISTER}
  \city{Porto}
  \country{Portugal}}

\author{Juliana Souza}
\email{juliana.p.souza@inesctec.pt}
\affiliation{%
  \institution{University of Minho \& INESC-TEC}
  \city{Braga}
  \country{Portugal}
}

%%
%% By default, the full list of authors will be used in the page
%% headers. Often, this list is too long, and will overlap
%% other information printed in the page headers. This command allows
%% the author to define a more concise list
%% of authors' names for this purpose.
\renewcommand{\shortauthors}{ Neves et al. }

%%
%% The abstract is a short summary of the work to be presented in the
%% article.
\begin{abstract}
       We introduce a language for formally reasoning about programs that
       combine differential constructs with probabilistic ones. The language
       harbours, for example, such systems as adaptive cruise controllers,
       continuous-time random walks, and physical processes involving multiple
       collisions, like in Einstein's Brownian motion.

       We furnish the language with an operational semantics and use it to
       implement a corresponding interpreter. We also present a complementary,
       denotational semantics and establish an adequacy theorem between both
       cases. 
\end{abstract}

%%
%% The code below is generated by the tool at http://dl.acm.org/ccs.cfm.
%% Please copy and paste the code instead of the example below.
%%
\begin{CCSXML}
<ccs2012>
   <concept>
       <concept_id>10003752.10010124</concept_id>
       <concept_desc>Theory of computation~Semantics and reasoning</concept_desc>
       <concept_significance>500</concept_significance>
       </concept>
 </ccs2012>
\end{CCSXML}

\ccsdesc[500]{Theory of computation~Semantics and reasoning}
%%
%% Keywords. The author(s) should pick words that accurately describe
%% the work being presented. Separate the keywords with commas.
\keywords{Semantics, Probabilistic Computation,
Hybrid Computation}

%%
%% This command processes the author and affiliation and title
%% information and builds the first part of the formatted document.
\maketitle

\section{Introduction}

\noindent 
\textbf{Motivation.}
This paper aims at combining two lines of research in programming theory --
hybrid and probabilistic programming. Both paradigms are rapidly
proliferating and have numerous applications (see \eg\
\cite{platzer11,heunen17,dahlqvist19,goncharov20}), however, despite increasing
demand, their combination from a programming-oriented perspective is rarely
considered.

Examples abound on why such a combination is crucial, and even extremely simple
cases attest to this. Let us briefly analyse one such case. The essence of
hybrid programming is the idea of mixing differential constructs with classical
program operations, as a way of modelling and analysing computational devices
that closely interact with physical processes, such as movement, energy, and
electromagnetism. One of the simplest tasks in this context is to move a
stationary particle (for illustrative purposes one can regard it as a vehicle)
from position zero to position, say, three. A simple hybrid program that
implements this task is the following one.
\begin{lstlisting}[  ]
p := 0 ;
v := 0 ;
p' = v, v' = 1  for x  ;
p' = v, v' = -1 for y  
\end{lstlisting}
Observe the use of variable assignments which set the vehicle's position
(\lstinline|p|) and velocity (\lstinline|v|) to zero, and note as well the
presence of the sequencing operator (\lstinline|;|). Most notably, the last two
instructions are the aforementioned differential constructs which in this case
describe the continuous dynamics of the vehicle at certain stages of the
program's execution. Specifically \lstinline|p' = v, v' = 1 for x| states that
the vehicle will \emph{accelerate at the rate of $1\sfrac{m}{s^2}$ for
\lstinline|x| seconds} while the last instruction states that it will
\emph{deccelerate at the rate of $-1\sfrac{m}{s^2}$ for \lstinline|y| seconds}.
The goal then is to `solve' the program for \lstinline|x| and \lstinline|y| so
that the vehicle moves and subsequently stops precisely at position three.

Now, the reader has probably noticed that the program just described is an
idealised version of reality: there will be noise in the vehicle's actuators,
which will cause fluctuations in the acceleration rates, and the switching time
between one continuous dynamics to the other is not expected to be precisely
\lstinline|x| seconds but a value close to it. In face of this issue it is
natural to introduce uncertainty factors in the previous program and change
the nature of the question \emph{``will my vehicle be at position three at
\lstinline|x|+\lstinline|y| seconds?"} to a more probabilistic one, where we
ask about probabilities of reaching the desired position instead.

Remarkably, the alternative approach of simply considering
\emph{discretisations} of hybrid programs combined with probabilistic
constructs (\ie\ using \emph{purely} probabilistic programming) does not
work in general -- choosing suitables sizes for the discrete steps can be as
hard or even harder than taking the continuous variant, particularly when
so-called Zeno behaviour is present~\cite{platzer18}.

\noindent
\textbf{Contributions.}
We contribute towards a programming theory of stochastic hybrid computation --
\ie\ the combination of hybrid with probabilistic programming.  Following
traditions of programming theory, we first introduce a simple while-language on
which to study stochastic hybrid computation. Our language extends the original
while-language~\cite{winskel93,reynolds98} merely with the kind of differential
construct just seen and with random sampling~\cite{kozen79}.  Its simplicity is
intended, so that one can focus on the core essence of stochastic hybrid
computation, but we will see that despite such it already covers a myriad of
interesting and well-known examples.

We then furnish the language with an  operational semantics, so that we can
formally reason about stochastic hybrid programs. Among other things, we use
the semantics to extend \emph{Lince} -- an existing interpreter of hybrid
programs~\cite{goncharov20,mendes24} -- to an interpreter of stochastic hybrid
ones. We will show how this interpreter can be used to automatically present
statistical information about the program under analysis. All examples in this
paper were animated using our extended Lince, available
online.\footnote{\url{http://arcatools.org/lince} -- plots were adapted to avoid relying on colours}

Finally, following the basic motto in programming theory that different
semantic approaches are necessary to fully understand the computational
paradigm at hand, we introduce a compositional, denotational semantics for our
language. In a nutshell, for a given initial state $\sigma$ a program
denotation $\sem{\text{\lstinline|p|}}$ will correspond to a \emph{Markov
process}~\cite{panangaden09} -- intuitively its outputs are given in the form
of a probability distribution and are time-dependent. The semantics is built in
a systematic way, using basic principles of monad
theory~\cite{moggi:1989,girymonad}, measure theory, and functional
analysis~\cite{dudley02,aliprantis06,panangaden09}, to which we can then recur
(via the semantics) to derive different results about stochastic hybrid
computation. We end our contributions with the proof of an adequacy theorem
between the operational semantics and the denotational counterpart.

\noindent
\textbf{Related work.}
Whilst research on probabilistic programming is extensive (see \eg\
\cite{kozen79,culpepper17,barthe20,heunen17,dahlqvist19}), work on marrying it
with hybrid programming is much scarcer and mostly focussed on (deductive)
verification. The two core examples in this line of research are~\cite{peng15}
and~\cite{platzer11}. The former presents an extension of the process algebra
\textsc{CSP} that harbours both probabilistic and differential constructs.
Among other things it furnishes this extension with a process-algebra like,
transition-based semantics -- which although quite interesting for verification
purposes is less amenable to operational perspectives involving \eg\
implementations. It presents moreover a corresponding proof system for
reasoning about certain kinds of Hoare triple.  The latter \emph{op.
cit.}~\cite{platzer11} extends the well-known \emph{differential dynamic logic}
with probabilistic constructs.  This logic is based on a Kleene-algebraic
approach which while has resulted in remarkable progress w.r.t. verification,
it is also known to have fundamental limitations in the context of hybrid
programming, particularly in the presence of non-terminating behaviour which is
frequent in this domain (see details for example
in~\cite{hofner11,hofner_phdthesis,goncharov20}).

A computational paradigm related to hybrid programming is reactive
programming~\cite{bainomugisha13,pembeci03}. The latter is strongly oriented to
the notions of signal and signal transformer (thus in the same spirit of
Simulink~\cite{zou15}) and puts emphasis on a declarative programming style (as
opposed to a more imperative one, like ours). It has been proposed as a way of
modelling hybrid systems, among other things, and recently was extended to a
probabilistic setting although no differentiation was
involved~\cite{baudart20}.

\noindent
\textbf{Document structure.} 
Section~\ref{sec:lng} introduces our stochastic language, its operational
semantics, and corresponding interpreter.  Section~\ref{sec:bck} recalls a
series of measure-theoretic foundations for developing our denotational
semantics -- which is then presented in Section~\ref{sec:denot} together with
the aforementioned adequacy theorem.  Section~\ref{sec:conc} discusses future
work and concludes.

We assume from the reader knowledge of probability theory~\cite{dudley02},
monads~\cite{moggi:1989,girymonad,hofmann14}, and category
theory~\cite{maclane98}. A monad will often be denoted in the form of a Kleisli
triple, \ie\ $(T,\eta^T,(-)^{\klcomp^T})$, and whenever no ambiguities arise we
will omit the superscripts in the unit and Kleisli operations. Similarly we
will sometimes denote a monad just by its functorial component $T$. Still about
notation, we will denote respectively by $\inl$ and $\inr$ the left $X \to X +
Y$ and right $Y \to X + Y$ injections into a coproduct. We will denote
measurable spaces by the letters $X$, $Y$, $Z$\dots and whenever we need to
explicitly work with the underlying $\sigma$-algebras we will use instead
$(X,\Sigma_X)$, $(Y,\Sigma_Y)$, $(Z,\Sigma_Z)$, and so forth.

\section{The language and its operational semantics}
\label{sec:lng}
We now introduce our stochastic hybrid language.  In a nutshell, it extends the
hybrid language in~\cite{goncharov20,mendes24} with an instruction for sampling
from the uniform continuous distribution over the unit interval $[0,1]$. 

We start by postulating a finite set
$\{$\lstinline[mathescape]|x$_1$,...,x$_n$|$\}$
of variables and a stock of \emph{partial} functions 
\lstinline|f|
$: \Reals^n \pmap \Reals$ that contains the usual arithmetic operations,
trigonometric ones, and so forth. As usual partiality will be crucial for
handling  division by $0$, logarithms, and square roots, among other things. We
then define expressions and Boolean conditions via the following standard BNF
grammars,
\begin{align*}
        & \text{\lstinline|e|}
        ::= \text {\lstinline|x|}
        \mid
        \text{\lstinline|f(e,...,e)| }
        \\
        & \text{\lstinline|b|}
        ::= 
        \text{\lstinline|e <= e| }
        \mid 
        \text{ \lstinline|b && b| }
        \mid
        \text{ \lstinline!b || b! }
        \mid
        \text{ \lstinline |tt| }
        \mid
        \text{ \lstinline |ff|}
\end{align*}
where \lstinline|x| is a variable in the stock of variables and \lstinline|f|
is a function in the stock of partial functions. Programs are then built
according to the two BNF grammars below.
\begin{align*}
        & 
        \text{\lstinline|a|} 
        ::= 
        \text{\lstinline[mathescape]|x$_1$' $\> \>$= e ,..., x$_n$' $\> \>$ = e for e|}
        \mid \text{\lstinline|x := e |}
        \mid 
        \text{\lstinline|x := unif(0,1)|}
        \\
        & \text{\lstinline|p|} 
        ::= 
        \text{\lstinline|a|} 
        \mid 
        \text{\lstinline|p ; p|}
        \mid 
        \text{\lstinline|if b then p else p|}
        \mid 
        \text{\lstinline[mathescape]|while b do $\> \> \{ \>$ p $\> \}$ |}
\end{align*}
The first grammar formally describes the three forms that an atomic program 
\lstinline|a| 
can take: \emph{viz.} a differential operation -- expressing a system's
continuous dynamics -- that will `run' for \lstinline|e| time units, an
assignment, and the aforementioned sampling operation. The second grammar
formally describes the usual program constructs of imperative
programming~\cite{winskel93,reynolds98}.

We proceed by introducing some syntactic sugar relative to the differential
operations and sampling. We will use this sugaring later on to provide further
intuitions about the language.  

First, observe that the language supports \emph{wait
calls} by virtue of the instruction
\lstinline[mathescape]|x$_1$' = 0 ,..., x$_n$' ${\> \> }$= 0 for e|. 
The latter states that the system is halted for
\lstinline|e|
time units, as no variable can change during this time period. The operation
will be denoted by the more suggestive notation
\mbox{\lstinline|wait e|}.
Next, although we have introduced merely sampling from the uniform distribution
over $[0,1]$, it is well-known that other kinds of sampling can be encoded from
it. For example, given two real numbers
\lstinline|a| $\leq$ \lstinline|b|
one can effectively sample from the uniform distribution over
$[$\lstinline|a|$,$ \lstinline|b|$]$
via the sequence of instructions,
\begin{lstlisting}
x := unif(0,1) ; x := (b - a) * x + a
\end{lstlisting}
For simplicity we abbreviate such sequence to \lstinline|x := unif(a,b)|. In
the same spirit, it will be useful to sample from the exponential distribution
with a given rate 
\lstinline|lambda| $>0 $,
in which case we write, %down,
\begin{lstlisting}[mathescape]
x := unif(0,1) ; x := - ln(x)/lambda
\end{lstlisting}
and abbreviate this program to
\lstinline|x := exp(lambda)|.
Next, in order to sample from normal distributions we resort for example to the
Box-Muller method~\cite{devroye86}. Specifically we write, % down,
\begin{lstlisting}[mathescape]
x1 := unif(0,1) ;
x2 := unif(0,1) ;
x  := sqrt(-2 * (ln x1)) * cos(2 * pi * x2)
\end{lstlisting}
and suggestively abbreviate the program to \lstinline|x := normal(0,1)|. The
latter amounts to sampling from the normal distribution with mean $0$ and
standard deviation $1$. Note that sampling from a normal distribution with mean
\lstinline|m| and standard deviation \lstinline|s| is then given by, 
\begin{lstlisting}[mathescape]
x := normal(0,1) ; x := m + s * x 
\end{lstlisting}
We abbreviate this last program to \lstinline|x := normal(m,s)|.  We encode
Bernoulli trials in our language standardly.  Specifically Bernoulli trials,
denoted by
\lstinline|bernoulli(r,p,q)|,
with \lstinline|r| $\in [0,1]$ and \lstinline|p|, \lstinline|q| two programs,
are encoded as,
\begin{lstlisting}[mathescape]
x := unif(0,1) ; if x <= r then p else q
\end{lstlisting}
which denotes the evaluation of \lstinline|p| with
probability \lstinline|r| and \lstinline|q| with probability 
\lstinline|1 - r|. 
Finally, we will also resort to the usual syntactic sugar constructs in
imperative programming, \eg\
\lstinline|x := x + 1|
as \lstinline|x++| and so forth.

We are ready to introduce a series of examples written in our programming
language.  In order to render their descriptions more lively we complement the
latter with analysis information given by the aforementioned interpreter.  The
interpreter as well as the examples are available online at
\url{http://arcatools.org/lince}.

\newcommand{\newPlot}[3][height=45mm]{
  \begin{figure}[h!]
  \centering\includegraphics[#1]{#2.pdf}
  \caption{#3}
  \label{fig:ex-#2}
  \end{figure}
}
\begin{example}[Approximations of normal distributions via random walks]
        We start with a very simple case that does not involve any
        differential operation. Specifically we approximate the standard
        normal distribution via a random walk -- a very common procedure in
        probabilistic programming~\cite{klafter11,barthe20}. Note that this is
        different from the previous sampling operation
        \lstinline|x := normal(0,1)|,
        in that it does not involve any trigonometric or logarithmic operation;
        furthermore the resulting distribution will always be discrete.
        The general idea is to write down the program below.
\begin{lstlisting}[mathescape]
x := 0 ; c := 0 ;
while c <= n do {
      bernoulli(1/2, x++, x--) ; c++
} ;
x := x/sqrt(n)
\end{lstlisting}
Via an appeal to the central limit theorem~\cite{dudley02} one easily sees that
the program approximates the expected normal distribution. The parameter
\lstinline|n| refers to the degree of precision, getting a perfect result as
\lstinline|n| $\to \infty$. 

\end{example}
\newPlot[width=80mm]{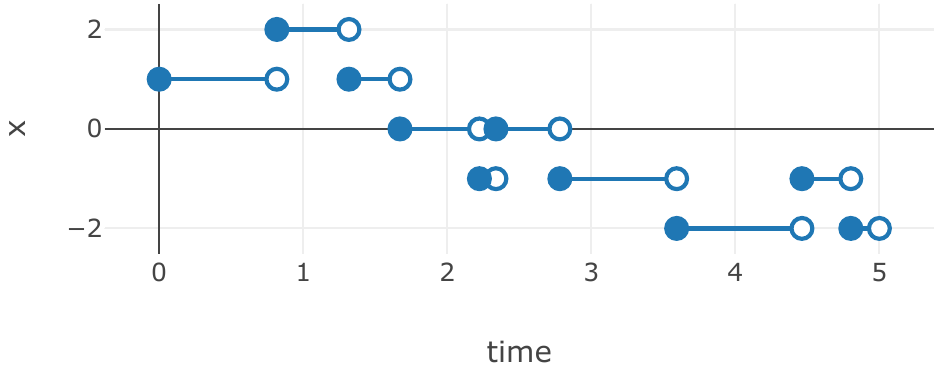}{An execution sample of a continuous-time random walk in which the
waiting time is given by sampling from the uniform distribution on $[0,1]$.}
\begin{example}[Continuous-time random walks]
        \label{ctrw}
We now shift our focus from random walks to continuous-time
ones~\cite{klafter11} -- a natural generalisation that introduces uncertainty
in the number of steps a walker performs in a given time interval.  These are
particularly useful for studying anomalous diffusion patterns (\ie\ the mean
squared displacement), with applications ranging from biology to
finance~\cite{klafter11}. A very simple example in our language is as follows.
\begin{lstlisting}[  ]
x := 0 ; 
while tt {
      bernoulli(1/2, x++, x--) ;
      d := unif(0,1) ;
      wait d
}
\end{lstlisting}
The prominent feature is that the walker now waits -- according to the uniform
distribution on $[0,1]$ -- before jumping. A helpful intuition from Nature is
to think for example of a pollinating insect that jumps from one flower to
another.  

Whilst we do not aim at fully exploring the example here, a quick inspection
tells that the average waiting time will be $\sfrac{1}{2}$ and thus the
diffusion pattern of this stochastic process grows linearly in time.  In
accordance to this, Figure~\ref{fig:ex-2.2-bw} presents an execution sample w.r.t.
the first 5 time units (horizontal axis) of the process: the plot was given by
our interpreter, and indeed shows 10 jumps performed during this period.

\end{example} 

\newPlot[width=80mm]{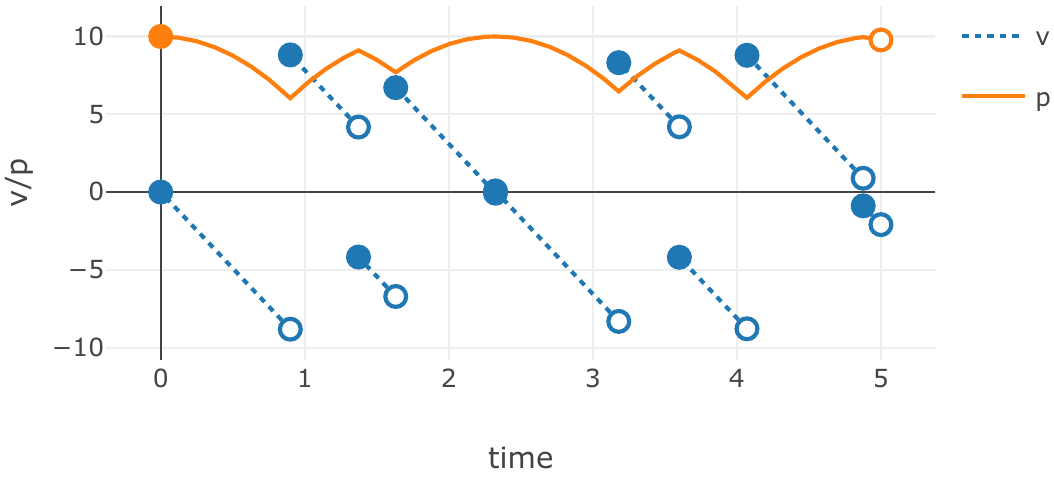}{An execution sample of the ball's position (\textsf{p})
and velocity (\textsf{v}) during the first 5 time units.}

\begin{example}[A ball and random kicks]
        \label{ball}
        Let us now consider one of the classical examples in hybrid systems
        theory, \emph{viz.} a bouncing ball~\cite{neves18,platzer18}. As usual
        we express the ball's continuous dynamics via a system of differential
        equations (those of motion) and jumps as assignments that `reset'
        velocity. In fact we will consider a variant in which there is no
        ground for the ball to bounce off. Instead it will be kicked
        randomly, as encoded in the following program.
        \begin{lstlisting}
p := 10 ; v := 0 ;
while tt do {
      d := unif(0,1) ;
      p' = v, v' = -9.8 for d ;
      v := -v 
}
        \end{lstlisting}
        In a nutshell, the ball moves according to the system of differential
        equations until it is kicked up (or down) as dictated by
        \lstinline|v := -v|.
        The duration between jumps is random, again with a mean time of
        $\sfrac{1}{2}$. Figure~\ref{fig:ex-2.3-bw} presents an execution sample
        of the bouncing ball during the first 5 time units.
\end{example}

\newPlot[height=55mm]{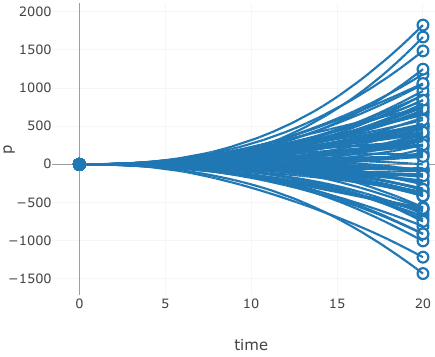}{Multiple execution samples of the particle's position
      overlayed, in order to depict how the position's probability mass 
      spreads over space w.r.t time.}

\begin{example}[Einstein's Brownian motion]
        \label{Einstein}
        Since the last example already moved us so close to it, we might as
        well consider Einstein's thesis for the cause of Brownian
        motion~\cite{einstein05}.  Essentially he posited that the (apparent)
        erratic motion of a particle suspended in a fluid is due to invisible
        collisions with atoms and molecules in the liquid.  In the
        one-dimensional setting, one can describe a particular instance of this
        stochastic process as follows.
        \begin{lstlisting}
p := 0 ; v := 0 ; a := 0 ;
while tt do {
      d := exp(lambda) ;
      bernoulli (1/2, a--, a++) ;
      p' = v, v' = a for d 
}
        \end{lstlisting}
        Note that random collisions are here encoded in the form a Bernoulli
        trial, and that their frequency is given by the Poisson process prescribed by
        the sampling operation.  Each collision causes a bump in the
        acceleration (which will either be incremented or decremented).
        Figure~\ref{fig:ex-2.4-100runs} then presents multiple execution samples
        overlayed on top of each other, in order to depict how the probability
        mass of the particle's position spreads over space w.r.t. time. 
\end{example}

\newPlot[width=80mm]{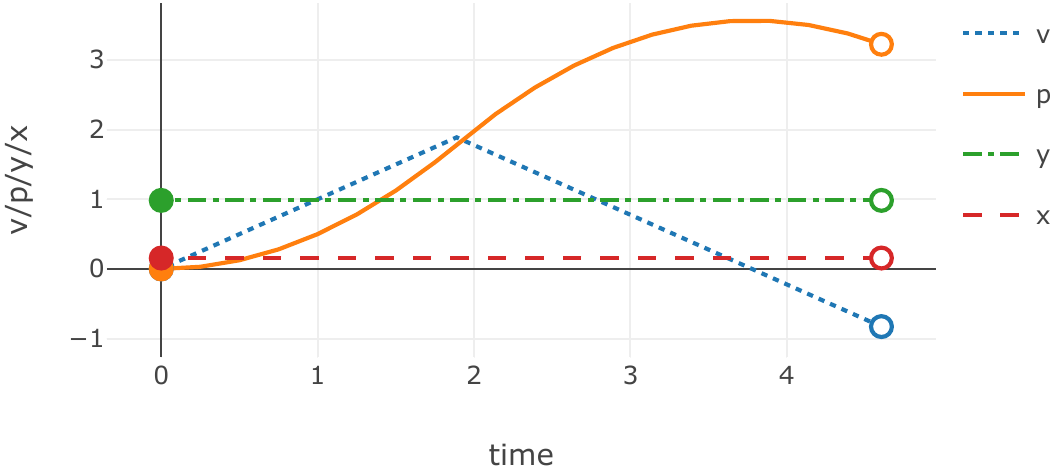}{Execution sample of the particle's position (\textsf{p})
and velocity (\textsf{v}).}

\begin{example}[Positioning of a particle]
        \label{ex:car}
        We now revisit the task of moving a stationary particle from position
        $0m$ to position $3m$, using acceleration rates $a = 1m/s^2$ and $a =
        -1m/s^2$. Recall that the respective program consists in accelerating
        (with rate $a$ $m/s^2$) and then decelerating ($-a$ $m/s^2$) the
        particle for prescribed durations \lstinline|x| and \lstinline|y|. Now,
        since $a$ and $-a$ have the same magnitude we can safely assume that
        \lstinline|x| $=$ \lstinline|y|. Such durations are then analytically deduced
        via the observation that,
        \[
        dist = \int^t_0 v_a(x) \, d x \> + \> \int^t_0 v_{-a}(x) \, dx 
        \]
        where $v_a(x) = a \cdot x$ and $v_{-a}(x) = v_a(t) + -a \cdot x$ are
        respectively the particle's velocity functions w.r.t. the time
        intervals $[0,t)$ and $[t,2\cdot t]$. In this context, $t$ is the value
        (\ie\ the duration \lstinline|x|) that we wish to find out (see further
        details in~\cite{mendes24}). Then observe that the value $dist$
        corresponds to the area of a triangle with basis $2 \cdot t$ and height
        $v_a(t)$.  This geometric shape yields the equations,
        \[
        \begin{cases}
                dist & = \frac{1}{2} \cdot (2 \cdot t) \cdot v_a(t) 
                \hspace{0.2cm} \text{\textbf{(area)}} \\
                v_a(t) & = a \cdot t \hspace{0.2cm} \text{\textbf{(height)}} 
        \end{cases}
        \hspace{0.5cm} \Longrightarrow \hspace{0.5cm}
        t = \sqrt{ \frac{dist}{a} }
        \]
        Finally note that for $dist = 3$ and $a = 1$ we obtain $t= \sqrt{3}$.
        Thus, ideally we would like to set the durations of both acceleration
        and deceleration to $\sqrt{3}$. This would then give rise to a total
        duration of $2\sqrt{3}$ and the particle would stop precisely at
        position $3$. However in reality we should expect a small error
        in the durations of such instructions. To this effect, in the program
        below we add an uncertainty factor to the calculated durations
        \lstinline|x| and \lstinline|y|.
\begin{lstlisting}[  ]
x := exp(2) + sqrt(3); 
y := exp(2) + sqrt(3); 
p' = v, v' = 1  for x ;
p' = v, v' = -1 for y
\end{lstlisting}
Figure~\ref{fig:ex-2.5-bw} presents an execution sample where we see the effects
of the small errors in the durations: specifically the program at some point
exceeds position 3 and terminates slightly above it. Note also that the 
expected shape of the velocity function is lost.

\end{example}

\begin{figure}[h!]
  \centering
  \includegraphics[width=80mm]{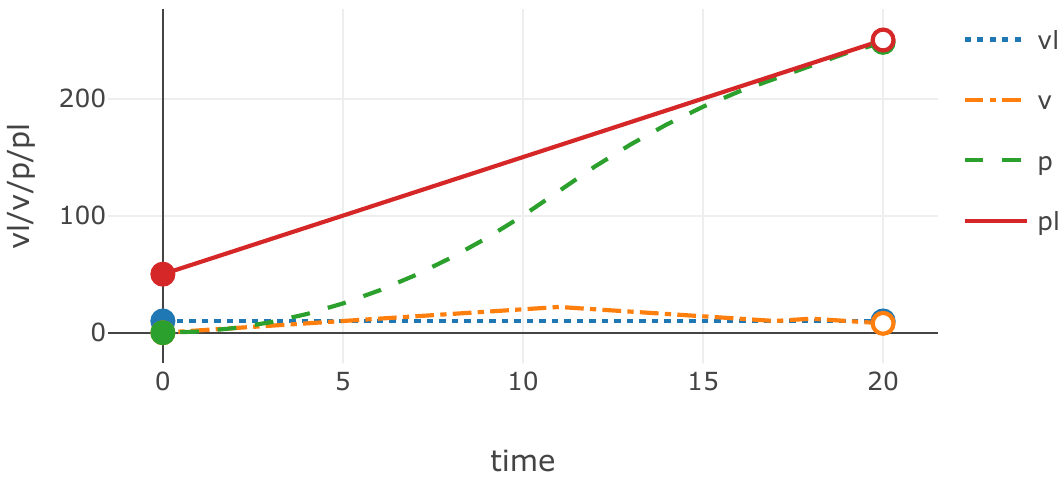}\\
  \includegraphics[width=80mm]{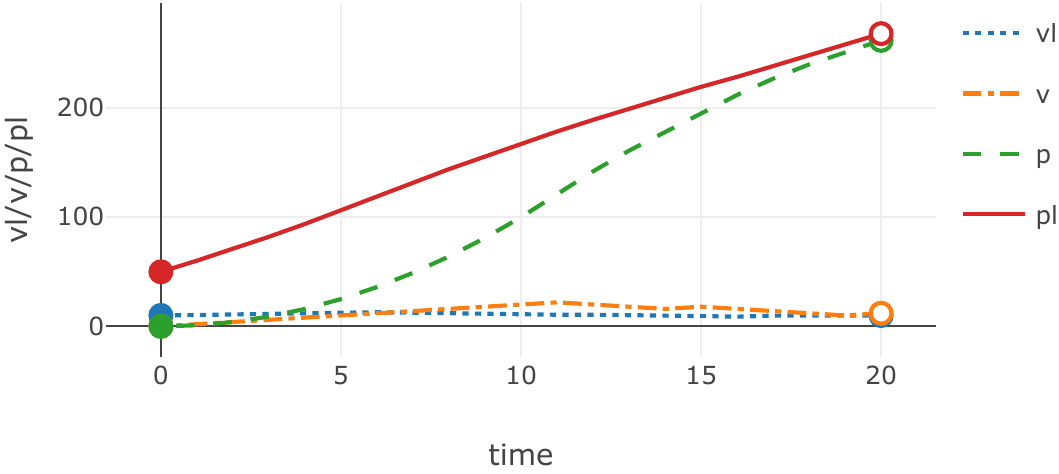}\\
  \includegraphics[width=80mm]{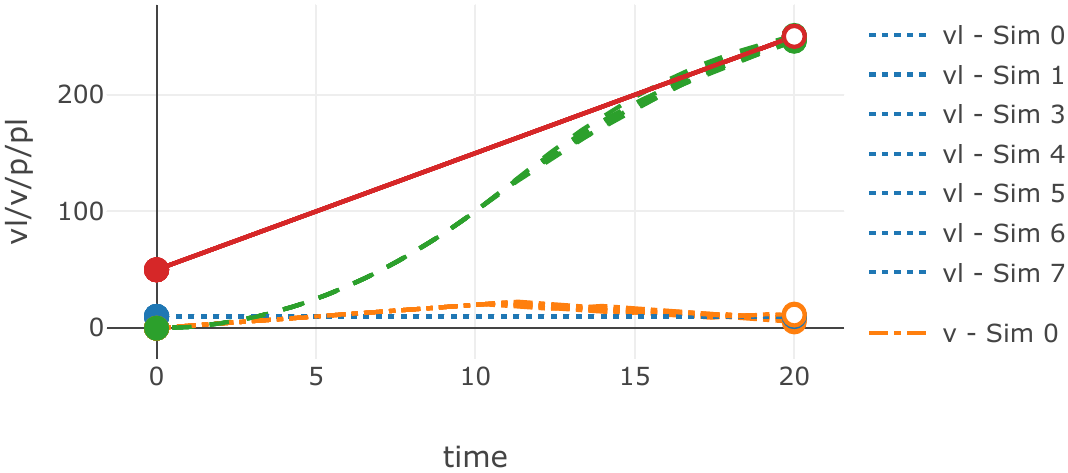}\\
  \wrap{\includegraphics[height=31mm]{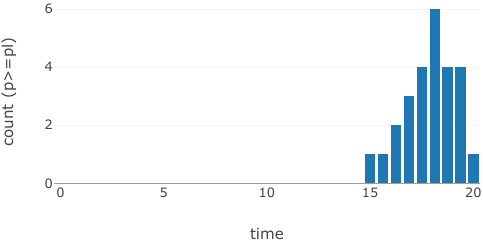}}\hfill%
  \wrap{\includegraphics[height=19mm]{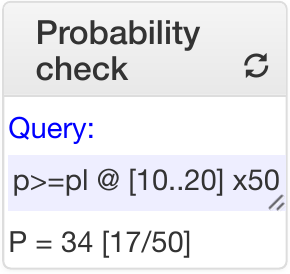}\\[3mm]}
  \caption{Multiple execution samples of different variants of 
          an adaptive cruise controller. Labels \textsf{p} and \textsf{v}
  denote the follower's position and velocity while \textsf{pl} and \textsf{vl}
indicate the leader's position and velocity. An histogram and a
probability checker that count how many times \textsf{p$\geq$pl} in 50 runs
with probabilistic waiting times.}
% , over time and in total}
  \label{fig:ex-2.6}
\end{figure}

\begin{example}[Adaptive cruise controller]
        \label{acc}
        Lastly we consider a scenario in which a particle tries to follow
        another one as closely as possible and without crashing into it. For
        illustration purposes we consider that the following particle
        (henceforth, the follower) starts 50 meters behind the other one
        (henceforth, the leader) and that it is stationary.  It will be able to
        either accelerate or brake with forces \eg\ $2m/s^2$ and $-2m/s^2$,
        respectively, during 1 time unit each time.  The leader, on the other
        hand, starts with its velocity at $10m/s$ and cannot accelerate or
        brake.

        The follower's choice of whether to accelerate or brake each time is
        determined by checking whether, in case of choosing to accelerate
        during one time unit, a `safe braking distance' from the leader is
        maintained -- a braking distance is determined `safe' if the follower's braking
        trajectory does not intersect that of the leader.
        Technically this amounts to finding the roots of the pointwise
        difference of both trajectories (\ie\ finding the roots of a quadratic
        equation): the absence of roots amounts to the absence of intersections
        (see further details in~\cite{mendes24}). The overall idea of the
        scenario just described is encoded by the following program, where the
        operation
\lstinline|safe(p,v,pl,vl)|
informs whether roots were found or not.
\begin{lstlisting}[]
p := 0 ; v := 0 ; pl := 50 ; vl := 10 ;
while tt {
  if safe(p,v,pl,vl)
  then p' = v, v' =  2, pl' = vl, vl' = 0 for 1 
  else p' = v, v' = -2, pl' = vl, vl' = 0 for 1
}
\end{lstlisting}
        Let us now add some uncertainty to the leader: it will be able to
        uniformly take any acceleration in the range $[-1,1]$. This means that
        for \emph{complete safety}, the function
\lstinline|safe(p,v,pl,vl)|
        needs to be tweaked to assume the \emph{worst possible scenario}: \ie\
        while the follower's braking trajectory will be the same, the leader's
        trajectory is now assumed to be the one that results from choosing
        acceleration $-1 m/s^2$ (and not $0m/s^2$, as before). The resulting
        program is then as follows.
\begin{lstlisting}
p := 0 ; v := 0 ; pl := 50 ; vl := 10 ; a := 0;
while tt {
  a := unif(-1,1) ; 
  if safe(p,v,pl,vl)
  then p' = v, v' =  2, pl' = vl, vl' = a for 1 
  else p' = v, v' = -2, pl' = vl, vl' = a for 1
}
\end{lstlisting}
        Yet another option for introducing uncertainty is to consider the fact
        that the waiting times will be given by an exponential
        distribution, as follows.
\begin{lstlisting}
p := 0 ; v := 0 ; pl := 50 ; vl := 10 ; 
while tt {
  x := exp(lambda) ; x++ ; 
  if safe(p,v,pl,vl)
  then p' = v, v' =  2, pl' = vl, vl' = 0 for x
  else p' = v, v' = -2, pl' = vl, vl' = 0 for x
}
\end{lstlisting}
The function
\lstinline|safe(p,v,pl,vl)|
would then need to be tweaked again -- but remarkably now with no hope for
complete safety, as in theory \lstinline|x| can take any value from
$[1,\infty)$ and thus no worst-case scenario exists.

Figure~\ref{fig:ex-2.6} (top) presents an execution sample in which the system
is completely deterministic (\ie\ the leader's velocity is constant with
$100$\% certainty) and thus the follower gets as close as possible to the
leader. On the other hand, Figure~\ref{fig:ex-2.6} (middle) presents an
execution sample in which the follower assumes the worst-case scenario just
described and thus cannot get as close to the leader. Finally
Figure~\ref{fig:ex-2.6} (bottom plot) presents several execution samples
overlayed in which the original \lstinline|safe| function is used and the
respective durations are given by the exponential distribution 
\lstinline|1 + exp(8)|.  
It shows that, while collisions are improbable they do occur.  This
low probability of collision is depicted in the histogram at the bottom of 
%in
%the histogram and probability check in Figure~\ref{fig:ex-2.6} (bottom)
Figure~\ref{fig:ex-2.6}, which counts how many times \lstinline|p >= pl| holds
in 50 runs over time (bottom-left) and anywhere in the interval [10,20]
(bottom-right).
% \textbf{\textcolor{red}{missing mentioning bottom histogram+prob-checker.}}
\end{example}

\noindent
\textbf{Operational semantics.} 
The section's remainder is devoted to introducing an operational semantics for
the language -- not only such is a basis for formal reasoning about stochastic
hybrid programs it is also the engine of the interpreter that we have been
showcasing thus far.  In a nutshell, the semantics marries Kozen's operational
semantics for a probabilistic language~\cite{kozen79} with the semantics of
hybrid programs that was presented in~\cite{goncharov20,mendes24}.  We will
need some preliminaries.

We take the Hilbert cube $[0,1]^\omega$ as the source of randomness.
Operationally speaking this means that sampling will amount to drawing values
from an element of $[0,1]^\omega$ (a stream) that is fixed \emph{a priori}. For
example, sampling once will amount to taking the head of this element and
sampling $n$ times will amount to taking the respective prefix of size $n$.
Next we assume that the semantics of expressions \lstinline|e| and Boolean
conditions \lstinline|b| are given by partial maps $\sem{\text{\lstinline|e|}}:
\Reals^n \xrightharpoonup{\hspace{0.1cm}} \Reals$ and
$\sem{\text{\lstinline|b|}}: \Reals^n \xrightharpoonup{\hspace{0.1cm}}
\{\mathtt{tt},\mathtt{ff}\}$. These can be defined in the usual way. Now, since
we are in the context of imperative programming we will recur to the notion of
a store
$\sigma : \{$\lstinline[mathescape]|x$_1$, ... ,x$_n$|$\} \to \Reals$
(also known as memory or environment)~\cite{winskel93,reynolds98}.
It assigns a real number to any given variable in the language. For a store
$\sigma$, we will use the notation
$\sigma[$\lstinline|x |$ \mapsto v]$
to denote the store that is exactly like $\sigma$ except for the fact that
\lstinline|x| is now assigned value $v$. Finally we assume that any system of
differential equations in our language induces a continuous map $\phi : \Reals^n
\times \Rz \xrightharpoonup{\hspace{0.1cm}} \Reals^n$ -- which in our context
will be regarded as the respective solution. For simplicity  we 
denote an operation
\lstinline[mathescape]|x$_1$' $$= e$_1$,...,x$_n$' $$ = e$_n$ for e|
by the simpler expression
\lstinline[mathescape]|diff(e$_1$,...,e$_n$,e)|.

The rules of our semantics are then presented in Figure~\ref{fig:small-step}.
They dictate what the next computational step will be when evaluating a program
\lstinline|p| with initial store $\sigma$ w.r.t. time instant $t$. As alluded
before each evaluation is also associated with a source of randomness $s \in
[0,1]^\omega$ from which \lstinline|p| draws sampling results. Note from the
rules that such computational steps (coming out of $\text{\lstinline|p|} \sep
\sigma \sep t \sep s$) lead to one of three possible outcomes:
\emph{viz.} an error flag $\err$, an output store, or a resumption (\ie\ an
updated evaluation stack of programs, store, time instant, and source of
randomness) which can then be evaluated in the next step (the empty stack is
denoted by $\skp$).

A subtle feature of our semantics is that the time instant $t$ at the beginning
of the evaluation will tend to decrease along the computational steps
performed.  Intuitively this means that the evaluation is `moving forward in
time' until reaching the target time instant $t$ that we wish to evaluate. Most
notably when it detects that such time instant was surpassed it forces the
termination of the evaluation, even if the evaluation stack of programs is
currently non-empty.  Such is expressed by the rule \textbf{(seq-stop$^\to$)},
and is crucial for evaluating non-terminating programs, like the ones described
in Example~\ref{ctrw}, Example~\ref{ball}, Example~\ref{Einstein}, and
Example~\ref{acc}. We illustrate this feature next with a simple example, but
more details can also be found in~\cite{goncharov20}.
\begin{example}
\label{ex:stop}
Consider the following non-terminating program,
\begin{lstlisting}[mathescape]
x := 0 ; while tt $\{$ x++ ; wait 1 $\}$
\end{lstlisting}
        Although the loop involved does not terminate, one can always evaluate
        the program in a finite amount of steps for any given time instant.
        Let us see what happens, for example, at time instant $1 +
        \sfrac{1}{2}$.  First, for simplicity we denote the loop simply by
        \lstinline|p| and the store 
$\sigma : \{$ \lstinline|x| $\} \to \Reals$
that is defined as $\sigma($\lstinline|x|$) = v$ by \lstinline|x| $\mapsto v$.
We then deduce the following sequence of small-step transitions which arise
from the rules in Figure~\ref{fig:small-step}.
\begin{align*}
        \text{\lstinline|x:= 0 ; p|} \sep\ \sigma \sep\ 1 + \sfrac{1}{2}
        \sep\ s \to \> \>
        &
        \text{\lstinline|p|} \sep\ (\text{\lstinline|x|} \mapsto 0) \sep\
        1 + \sfrac{1}{2} \sep\ s
        \\
        \to \> \>
        &
        \text{\lstinline|x++ ; wait 1 ; p|} \sep\ (\text{\lstinline|x|} \mapsto 0)
        \sep\
        1 + \sfrac{1}{2} \sep\ s
        \\
        \to \> \>
        &
        \text{\lstinline|wait 1 ; p|} \sep\ (\text{\lstinline|x|} \mapsto 1)
        \sep\
        1 + \sfrac{1}{2} \sep\ s
        \\
        \to \> \>
        &
        \text{\lstinline|p|} \sep\ (\text{\lstinline|x|} \mapsto 1)
        \sep\
        \sfrac{1}{2} \sep\ s
        \\
        \to \> \>
        &
        \text{\lstinline|x++ ; wait 1 ; p|} \sep\ (\text{\lstinline|x|} \mapsto 1)
        \sep\
        \sfrac{1}{2} \sep\ s
        \\
        \to \> \>
        &
        \text{\lstinline|wait 1 ; p|} \sep\ (\text{\lstinline|x|} \mapsto 2)
        \sep\
        \sfrac{1}{2} \sep\ s
        \\
        \to \> \>
        &
        \text{\lstinline|x|} \mapsto 2
\end{align*}
The last transition arises precisely due to rules $\textbf{(diff-stop$^\to$)}$
and $\textbf{(seq-stop$^\to$)}$, which are applicable due to $\sfrac{1}{2}$
being \emph{strictly} smaller than the wait call's duration (\emph{viz.} 1).
Henceforth we will call events such as the one just described \emph{time-based
terminations}, in order to distinguish from those ordinary terminations that
originate from emptying the program evaluation stack. We will see later on that
this subtle aspect can be neatly handled in the denotational context via an
exception monad.

\end{example}

Let us briefly mention how our interpreter uses this semantics to provide
(overlayed) execution samples of a given stochastic hybrid program
\lstinline|p|. The basic idea is simple: we first generate an entropy source \ie\
a sample $s
\in [0,1]^\omega$ and then use it to compute the execution chain
$\text{\lstinline|p|} \sep\ \sigma \sep\ t \sep\ s \ssto \dots$ for multiple time
instants $t$, corresponding to different snapshots of the program's 
behavioural trajectory. In order to obtain overlayed execution samples one just repeats
this process multiple times, \ie\ with different samples.

We now introduce a big-step operational semantics in
Figure~\ref{fig:big-step2}, which abstracts from intermediate computational
steps in the context of the small-step variant. Although in programming theory
big-step semantics have multiple applications~\cite{winskel93,reynolds98}, here
we use it just to connect the small-step variant to the denotational
counterpart (in Section~\ref{sec:denot}). In other words, we use the big-step
variant as a midpoint between small-step and denotational semantics. Since this
big-step semantics is based on the same ideas as the small-step one, we skip
its explanation.

We conclude the section by showing that the small-step and big-step semantics
agree, in the sense that they give rise to the same input-output relation.
Technically we will need to factor in a form of transitive closure from the
small-step relation, as follows. We call `terminal' those tuples arising from
steps (in the small-step semantics) that are of the form $\skp \sep\ \sigma
\sep\ t \sep\ s$, or $\sigma$, or $\err$.  Then we build an input-output
relation ($\Rightarrow$) from the small-step semantics in the way detailed in
Figure~\ref{fig:big-step}. Finally,

\begin{theorem}
        For every program \lstinline|p|, store $\sigma$, time instant $t$, and
        source of randomness $s$, we have the following following equivalence: 
        \[
                \text{\lstinline|p|} \sep\ \sigma \sep\ t \sep\ s
                \bsto v 
                \quad
                \text{ iff }
                \quad
                \text{\lstinline|p|} \sep\ \sigma \sep\ t \sep\ s
                ~\Rightarrow~ v 
        \]
\end{theorem}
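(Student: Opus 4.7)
The plan is to prove both directions by induction, with the small-step configuration stack and the time budget $t$ playing the central role. Since the two semantics are designed to agree on input-output behaviour but differ on how they expose intermediate computations, the standard recipe applies: decompose big-step judgments into finite chains of small-step transitions, and conversely collapse such chains into a single big-step derivation.

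For the direction $\Downarrow \Rightarrow\; \Rightarrow$, I would induct on the derivation of $\text{\lstinline|p|} \sep \sigma \sep t \sep s \bsto v$. For each big-step rule (assignment, sampling, differential, sequencing, conditional, and both loop cases), I would invoke the induction hypothesis on the subderivations to obtain finite small-step chains, then stitch these into a single chain reaching a terminal configuration. The main bookkeeping is a congruence lemma for sequencing: whenever $\text{\lstinline|p|} \sep \sigma \sep t \sep s \ssto^* v'$ with $v'$ a store (ordinary termination), we can prepend the stack context to get $\text{\lstinline|p; q|} \sep \sigma \sep t \sep s \ssto^* \text{\lstinline|q|} \sep v' \sep t' \sep s'$, where $t'$ is the residual time budget and $s'$ the residual entropy stream. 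The case for the differential construct deserves care because its duration may exceed $t$, triggering $\textbf{(diff-stop}^\to\textbf{)}$ followed by $\textbf{(seq-stop}^\to\textbf{)}$; here I would mirror the corresponding time-based termination rule on the big-step side.

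For the direction $\Rightarrow\; \Rightarrow \Downarrow$, I would proceed by induction on the length of the small-step chain, with the following head-expansion lemma as the workhorse:
\begin{quote}
\emph{If $c \to c'$ and $c' \Rightarrow v$, then $c \Rightarrow v$,}
\end{quote}
where $c,c'$ range over configurations and $v$ over terminal outcomes (store, $\err$, or $\skp$-like). Proving this lemma reduces to a case analysis on the small-step rule used for $c \to c'$; each case rebuilds the big-step derivation by grafting a small-step action onto the top of a derivation of $c' \Downarrow v$. In particular the rules for atomic assignment, sampling (which consumes the head of $s$), and the differential construct align directly with their big-step counterparts, while the sequencing rules require one extra inversion step on $c' \Downarrow v$.

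The main obstacle I expect is the interaction between time-based termination and non-trivial sequencing contexts. Because $\textbf{(seq-stop}^\to\textbf{)}$ short-circuits arbitrary remaining programs on the stack, the small-step chain may terminate at a point where the big-step derivation would naively still be recursing into a continuation; reconciling this requires verifying that the big-step semantics has a symmetric clause that discards the tail of the program once $t$ is exhausted (as Example~\ref{ex:stop} illustrates). Error propagation poses a similar but milder issue, handled uniformly by observing that $\err$ absorbs every sequencing context on both sides. Once these two propagation phenomena are cleanly phrased as lemmas, the remaining cases are straightforward rule-by-rule verifications.
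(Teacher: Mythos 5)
Your proposal takes essentially the same route as the paper: the left-to-right direction by induction over big-step derivations, and the right-to-left direction by induction on the length of the small-step reduction sequence together with your ``head-expansion'' lemma, which is precisely the paper's Lemma~\ref{lem:progress}. One small correction: as displayed, your lemma reads ``if $c \to c'$ and $c' \Rightarrow v$ then $c \Rightarrow v$'', which is just the defining closure rule of $\Rightarrow$ and hence vacuous for this direction; it should read $c' \Downarrow v$ and $c \Downarrow v$, as your own subsequent prose (grafting onto a derivation of $c' \Downarrow v$) makes clear you intend.
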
 
\begin{proof}
The right-to-left direction follows by induction on the length of
 small-step reduction sequences and Lemma~\ref{lem:progress}.
The left-to-right direction follows by induction over big-step derivations.
\end{proof}

\begin{lemma}\label{lem:progress}
For a program \lstinline|p|, a store $\sigma$, time instant $t$, and a source
of randomness $s$, the following holds:
if\/ $\text{\lstinline|p|} \sep\ \sigma \sep\ t \sep\ s 
                \ssto \text{\lstinline|p|}' \sep\ \sigma'\sep\
                t' \sep\ s'$
                and\/ $\text{ \lstinline|p|}'\sep\ \sigma' \sep\ t' \sep\ s' 
                \bsto v$ then
                $\text{\lstinline|p|} \sep\ \sigma \sep\ t \sep\ s
                \bsto v$.
\end{lemma}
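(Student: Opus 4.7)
The plan is to proceed by induction on the derivation of the small-step transition $\text{\lstinline|p|} \sep\ \sigma \sep\ t \sep\ s \ssto \text{\lstinline|p|}' \sep\ \sigma'\sep\ t' \sep\ s'$. Since the small-step rules of Figure~\ref{fig:small-step} operate on the head of the evaluation stack, the induction collapses in most cases to a direct case analysis on the rule that fires; only the congruence-style rule(s) for sequencing genuinely use the inductive hypothesis on a smaller sub-derivation.

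For each axiom, the strategy is to identify the matching rule in the big-step semantics of Figure~\ref{fig:big-step2} and discharge its premise using the hypothesis $\text{\lstinline|p|}' \sep\ \sigma' \sep\ t' \sep\ s' \bsto v$. Concretely: for assignment, the reduct carries the updated store $\sigma[\text{\lstinline|x|} \mapsto \sem{\text{\lstinline|e|}}(\sigma)]$ and the matching big-step rule has precisely that continuation as its premise; for sampling from $\unif(0,1)$ the head of $s$ is consumed analogously; for conditionals and \textbf{while} unfoldings the branch is dictated by $\sem{\text{\lstinline|b|}}(\sigma)$; and for a fully elapsed $\text{\lstinline|diff(e$_1$,\ldots,e$_n$,e)|}$ one invokes the big-step differential rule with the flow $\phi$ induced by the system. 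The error-flag axioms, firing when $\sem{\text{\lstinline|e|}}$ or $\sem{\text{\lstinline|b|}}$ is undefined, are symmetric and force $v = \err$.

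For the sequencing congruence rule, which lifts a head transition $\text{\lstinline|p$_1$|} \sep\ \sigma \sep\ t \sep\ s \ssto \text{\lstinline|p$_1$'|} \sep\ \sigma' \sep\ t' \sep\ s'$ to a transition on $\text{\lstinline|p$_1$ ; p$_2$|}$, we destruct the big-step derivation of the reduct at the intermediate configuration, feed the head sub-derivation to the inductive hypothesis to obtain a big-step derivation of $\text{\lstinline|p$_1$|}$, and recompose with the big-step sequencing rule to conclude.

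The case I expect to be the main obstacle is the time-based termination rule $\textbf{(seq-stop$^\to$)}$ illustrated in Example~\ref{ex:stop}: here the small-step collapses the remainder of the evaluation stack to a terminal store $\sigma'$ because the head differential or wait call has duration exceeding the remaining budget $t$, so the hypothesis pins $v = \sigma'$. Completing the induction relies on there being a matching big-step rule that directly emits $\sigma'$ from $\text{\lstinline|diff(e$_1$,\ldots,e$_n$,e) ; q|} \sep\ \sigma \sep\ t \sep\ s$ under the same precondition, irrespective of the queued programs \lstinline|q|; since the big-step semantics of Figure~\ref{fig:big-step2} is designed as a faithful midpoint between the small-step and denotational variants, such a rule is present and its application closes the case.
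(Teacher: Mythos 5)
Your proposal follows the same route as the paper's proof: induction over the small-step derivation, with the sequencing congruence rule $\textbf{(seq$^\to$)}$ as the only case that genuinely invokes the induction hypothesis and all other rules discharged by matching them against the corresponding big-step rules. One small remark: the $\textbf{(seq-stop$^\to$)}$ case you single out as the main obstacle actually falls outside the lemma's hypothesis---its conclusion is a terminal store $\sigma'$ rather than a resumption $\text{\lstinline|p|}' \sep\ \sigma' \sep\ t' \sep\ s'$---so it is vacuous for this lemma and is instead absorbed by the terminal base case of the theorem via Figure~\ref{fig:big-step}.
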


\begin{proof} 
        Follows straightforwardly by induction over small-step
        derivations. 
\end{proof}
\begin{figure*}[h!]
\begin{minipage}{1\textwidth}
\begin{flalign*}
\textbf{(asg-rnd$^\to$)}
&&
\hspace{-3cm}
\text{\lstinline|x := unif(0,1)|}
\sep\ \sigma \sep\ t  \sep\ (h:s)
\ssto
\skp \sep\ \sigma[\text{\lstinline|x|} \mapsto h] \sep\ t \sep\ s
&&
%\prem{}
\end{flalign*} 

\nline

\begin{flalign*}
\textbf{(asg$^\to$)}
&&
\text{\lstinline|x := e|} \sep\ \sigma \sep\ t \sep\ s
\ssto
\skp \sep\ \sigma[\text{\lstinline|x|} \mapsto 
\sem{\text{\lstinline|e|}}(\sigma) ] \sep\ t  \sep\ s
&&
\prem{\sem{\text{\lstinline|e|}}(\sigma) \text{ defined}}
\end{flalign*} 

\nline

\begin{flalign*}
\textbf{(asg-err$^\to$)}
&&
\text{\lstinline|x := e|} \sep\ \sigma \sep\ t \sep\ s
\ssto \err 
&&
\prem{\sem{\text{\lstinline|e|}}(\sigma) \text{ undefined}}
\end{flalign*} 

\nline

\begin{flalign*}
\textbf{(diff-stop$^\to$)}
  &&
  \text{\lstinline[mathescape]|diff(e$_1$,...,e$_n$,e)|}
  \sep\ \sigma \sep\ t \sep\ s
  \ssto
  \phi(\sigma,t) 
  &&
  \prem{\sem{\text{\lstinline|e|}}(\sigma) > t}
\end{flalign*} \nline
\begin{flalign*}
\textbf{(diff-skip$^\to$)}
&& 
    \text{\lstinline[mathescape]|diff(e$_1$,...,e$_n$,e)|}
    \sep\ \sigma \sep\ t \sep\ s
    \ssto
    \skp \sep\ 
    \phi(\sigma, \sem{\text{\lstinline|e|}}(\sigma)) 
    \sep\ t - \sem{\text{\lstinline|e|}}(\sigma)
    \sep\ s
&&
\prem{0 \leq \sem{\text{\lstinline|e|}}(\sigma) \leq t}
\end{flalign*} \nline
\begin{flalign*}
\textbf{(diff-err$^\to$)}
&& 
  \text{\lstinline[mathescape]|diff(e$_1$,...,e$_n$,e)|}
  \sep\ \sigma \sep\ t \sep\ s
  \ssto \err 
&&
\prem{\sem{\text{\lstinline|e|}}(\sigma) < 0 \text{ or }
\sem{\text{\lstinline|e|}}(\sigma) \text{ undefined}}
\end{flalign*} \nline
\begin{flalign*}
\textbf{(if-true$^\to$)}
&&
\text{\lstinline|if b then p else q|}
\sep\ \sigma \sep\ t \sep\ s \ssto 
\text{\lstinline|p|} \sep\ \sigma \sep\ t \sep\ s 
&&
\prem{\sem{\text{\lstinline|b|}}(\sigma) =\mathtt{tt}}
%}
\end{flalign*} \nline
\begin{flalign*}
\textbf{(if-false$^\to$)}
&&
\text{\lstinline|if b then p else q|}
\sep\ \sigma \sep\ t \sep\ s \ssto 
\text{\lstinline|q|} \sep\ \sigma \sep\ t \sep\ s 
&&
\prem{\sem{\text{\lstinline|q|}}(\sigma)=\mathtt{ff}}
%}  
\end{flalign*} \nline
\begin{flalign*}
\textbf{(if-err$^\to$)}
&&
\text{\lstinline|if b then p else q|}
\sep\ \sigma \sep\ t \sep\ s \ssto \err
&&
\prem{\sem{\text{\lstinline|b|}}(\sigma) \text{ undefined}}
%}  
\end{flalign*} \nline
\begin{flalign*}
\textbf{(wh-true$^\to$)}
&&
\text{\lstinline[mathescape]|while b do $\> \> \{ \>$ p $ \> \}$|}
\sep\ \sigma \sep\ t \sep\ s \ssto 
\text{\lstinline[mathescape]|p ; while b do  $\> \> \{ \>$ p  $ \> \}$|}
\sep\ \sigma \sep\ t \sep\ s 
&&
\prem{\sem{\text{\lstinline|b|}}(\sigma)=\mathtt{tt}}
\end{flalign*} \nline
\begin{flalign*}
\textbf{(wh-false$^\to$)}
&&
\text{\lstinline[mathescape]|while b do $\> \> \{ \>$ p $ \> \}$|}
\sep\ \sigma \sep\ t \sep\ s \ssto \skp \sep\ \sigma \sep\ t
\sep\ s 
&&
\prem{\sem{\text{\lstinline|b|}}(\sigma)=\mathtt{ff}}
\end{flalign*}\nline 
\begin{flalign*}
\textbf{(wh-err$^\to$)}
&&
\text{\lstinline[mathescape]|while b do $\> \> \{ \>$ p $ \> \}$|}
\sep\ \sigma \sep\ t \sep\ s \ssto \err
&&
\prem{\sem{\text{\lstinline|b|}}(\sigma) \text{ undefined} }
\end{flalign*} 
\begin{flalign*}
\lrule{(seq-stop$^\to$)}{\text{\lstinline|p|} \sep\ \sigma \sep\ t 
\sep\ s \ssto \sigma'}{
\text{\lstinline|p ; q|} \sep\ \sigma \sep\ t \sep\ s  \ssto \sigma'}
&&
\lrule{(seq-skip$^\to$)}{\text{\lstinline|p|} \sep\ \sigma \sep\ t 
        \sep\ s \ssto \skp \sep\ \sigma' \sep\ t' \sep\ s'}{
        \text{\lstinline|p ; q|} \sep\ \sigma \sep\ t \sep\ s 
        \ssto \text{\lstinline|q|} \sep\ \sigma' \sep\ t'
        \sep\ s'}
\end{flalign*}
\begin{flalign*}
\lrule{(seq-err$^\to$)}{\text{\lstinline|p|} \sep\ \sigma \sep\ t \sep\ s 
\ssto \err}{
\text{\lstinline|p ; q|} \sep\ \sigma \sep\ t \sep\ s  \ssto \err }
\hspace{0.9cm}
&&
\lrule{(seq$^\to$)}{
\text{\lstinline|p|} \sep\ \sigma \sep\ t \sep\ s 
\ssto \text{\lstinline|p|}' \sep\ \sigma' \sep\ t' \sep\ s'}{
\text{\lstinline|p ; q|} \sep\ \sigma \sep\ t \sep\ s 
\ssto \text{\lstinline|p|}' \text{ \lstinline| ; q|} \sep\ \sigma' \sep\ t' \sep\ s'} 
\end{flalign*} 
  \end{minipage}
  \caption{Small-step operational semantics.}
  \label{fig:small-step}
\end{figure*}

\begin{figure*}
\begin{minipage}{1\textwidth}
\begin{flalign*} 
        \lrule{(asg-rnd)}{
        }{
                \text{\lstinline{x := unif(0,1)}} \sep\
                \sigma \sep\ t \sep\ (h:s)
                \bsto \skp \sep\ \sigma[\text{\lstinline{x}} \mapsto h] \sep\ t \sep\ s
        }
\end{flalign*}
\\[-12pt]
\begin{flalign*}
        \lrule{(asg-skip)}{
                \sem{\text{\lstinline{e}}}(\sigma) \text{ defined }
        }{
                \text{\lstinline{x := e}} \sep\ \sigma 
                \sep\ t \sep\ s
                \bsto
                \skp \sep\ 
                \sigma[\text{\lstinline{x}} \mapsto 
                \sem{\text{\lstinline{e}}}(\sigma)] \sep\ t \sep\ s
        }
&&
        \lrule{(asg-err)}{ 
                \sem{\text{\lstinline{e}}}(\sigma) \text{ undefined }
        }{
                \text{\lstinline{x := e}}
                \sep\ \sigma \sep\ t \sep\ s
                \bsto
                \err
        }
\end{flalign*} 
\\[-12pt]
\begin{flalign*}
        \lrule{(diff-skip)}{
                0 \leq \sem{\text{\lstinline{e}}}(\sigma) \leq t
        }{
\text{\lstinline[mathescape]|diff(e$_1$,...,e$_n$,,e)|} 
                \sep\ \sigma \sep\ t \sep\ s
                \bsto
                \skp \sep\ \phi(\sigma, \sem{\text{\lstinline{e}}}(\sigma))
                \sep\ t - \sem{\text{\lstinline{e}}}(\sigma) \sep\ s
        }
\end{flalign*} 
\\[-12pt]
\begin{flalign*}
        \lrule{(diff-stop)}{
                \sem{\text{\lstinline|e|}}(\sigma) > t
        }{
\text{\lstinline[mathescape]|diff(e$_1$,...,e$_n$,,e)|} 
                \sep\ \sigma \sep\ t \sep\ s
                \bsto
                \phi(\sigma, t)
        }
&&
\lrule{(diff-err)}{\sem{\text{\lstinline|e|}}(\sigma) < 0 \text{ or } 
\sem{\text{\lstinline|e|}}(\sigma) \text{ undefined }}{
\text{\lstinline[mathescape]|diff(e$_1$,...,e$_n$,,e)|} 
    \bsto
    \err
}
\end{flalign*} 
\\[-12pt]
\begin{flalign*}
        \lrule{(seq-skip)}{
                \text{\lstinline|p|} \sep\
                \sigma \sep\ t \sep\ s 
                \bsto \skp \sep\ \sigma' \sep\ t' \sep\ s' \qquad 
                \text{\lstinline|q|}
                \sep\ \sigma' \sep\ t' \sep\ s' \bsto v
        }{
                \text{\lstinline|p ; q|}
                \sep\ \sigma \sep\ t \sep\ s \bsto v
        }
\end{flalign*} 
\\[-12pt]
\begin{flalign*}
        \lrule{(seq-stop)}{
                \text{\lstinline|p|}
                \sep\
                \sigma \sep\ t \sep\ s \bsto \sigma'
        }{
                \text{\lstinline|p ; q|}
                \sep\ \sigma \sep\ 
                t \sep\ s \bsto \sigma'
        }
&&
        \lrule{(seq-err)}{
                \text{\lstinline|p|}
                \sep\ \sigma \sep\ t \sep\ s \bsto \err
        }{
                \text{\lstinline|p ; q|}
                \sep\ \sigma
                \sep\ t \sep\ s \bsto \err
        }
\end{flalign*} 
\\[-12pt]
\begin{flalign*}
        \lrule{(if-true)}{
                \sem{\text{\lstinline|b|}}(\sigma)=\mathtt{tt}
                \qquad 
                \text{\lstinline|p|}
                \sep\ \sigma \sep\ t \sep\ s \bsto v
        }{
                { \text{\lstinline|if b then p else q|} } 
                \sep\ \sigma \sep\ t \sep\ s \bsto v
        }
\end{flalign*} 
\\[-12pt]
\begin{flalign*}
        \lrule{(if-false)}{
                \sem{\text{\lstinline|b|}}(\sigma)=\mathtt{ff}
                \qquad 
                \text{\lstinline|q|}
                \sep\ \sigma \sep\ t \sep\ s \bsto v
        }{
                \text{\lstinline|if b then p else q|}
                \sep\ \sigma \sep\ t \sep\ s \bsto v
        }
&&
        \lrule{(if-err)}{
                \sem{\text{\lstinline|b|}}(\sigma) \text{ undefined }
        }{
                \text{\lstinline|if b then p else q|}
                \sep\ \sigma \sep\ t \sep\ s \bsto \err
        }
\end{flalign*} 
\\[-12pt]
\begin{flalign*}
        \lrule{(wh-true)}{
                \sem{\text{\lstinline|b|}}(\sigma) = 
                \mathtt{tt}\qquad 
                \text{\lstinline[mathescape]|p ; while b do $\> \{\>$ p $\> \}$|}
                \sep\ \sigma \sep\ t \sep\ s \bsto v 
        }{
                \text{\lstinline[mathescape]|while b do $\> \{\>$ p $\> \}$|}
                \sep\ \sigma \sep\ t \sep\ s
                \bsto 
                v
        }
\end{flalign*}
\\[-12pt]
\begin{flalign*}
        \lrule{(wh-false)}{
                \sem{\text{\lstinline|b|}}(\sigma) = \mathtt{ff}
        }{
                \text{\lstinline[mathescape]|while b do $\> \{\>$ p $\> \}$|}
                \sep\ \sigma \sep\ t \sep\ s
                \bsto 
                \skp \sep\ \sigma \sep\ t \sep\ s 
        }
&&
        \lrule{(wh-err)}{
                \sem{\text{\lstinline|b|}}(\sigma) \text{ undefined }
        }{
                \text{\lstinline[mathescape]|while b do $\> \{\>$ p $\> \}$|}
                \sep\ \sigma \sep\ t \sep\ s
                \bsto 
                \err
        }
\end{flalign*}
\end{minipage}
\caption{Big-step operational semantics.}
\label{fig:big-step2}
\end{figure*}

\begin{figure*}
\begin{minipage}{1\textwidth}
\begin{flalign*}
        \infer[(v \text{ terminal})]{
        \text{\lstinline|p|} \sep\ \sigma \sep\ t \sep\ s ~\Rightarrow~ v}
        {\text{\lstinline|p|} \sep\ \sigma \sep\ t \sep\ s \ssto v}
        \hspace{2.5cm}
        \infer{
        \text{\lstinline|p|} \sep\ \sigma \sep\ t \sep\ s \Rightarrow v 
        }{ \text{\lstinline|p|} \sep\ \sigma \sep\ t \sep\ s\ssto
        \text{\lstinline|p|}' \sep\ \sigma' \sep\ t' \sep\ s'
        \qquad 
        \text{\lstinline|p|}' \sep\ \sigma' \sep\ t' \sep\ s' \Rightarrow v 
}
\end{flalign*}
\end{minipage}
\caption{Big-step semantics via the small-step relation.}
\label{fig:big-step}
\end{figure*}

\section{Measure theory}
\label{sec:bck}
This section briefly recalls a series of results about measure
theory~\cite{aliprantis06,panangaden09,dudley02}, focus being on those that
form the backbone of the semantics described in Section~\ref{sec:denot}.

Our main working category will be $\Meas$, \ie\ that of measurable spaces and
measurable functions. Recall that it has both (infinite) products and
coproducts~\cite[Section 21]{cats}.  Recall as well that it is distributive,
\ie\ for all measurable spaces $X,Y,Z$ there exists a  certain isomorphism,
\[
        \dist : X \times (Y +Z) \to X \times Y + X \times Z
\]
Now, let $\CTop$ be the category of topological spaces and continuous maps, and
recall that it has (infinite) products and coproducts as well~\cite[Section
21]{cats}.  There exists a functor $B : \CTop \to \Meas$ that sends any given
topological space to the measurable space with the same carrier and equipped
with the respective Borel $\sigma$-algebra~\cite[Section 4.4]{aliprantis06}. In
particular when treating a subset of real numbers as a measurable space we will
be tacitly referring to the respective Borel $\sigma$-algebra. It is well-known
that $B$ preserves finite products of second-countable topological
spaces~\cite[Definition 6.3.7]{larrecq13}, which is the case for example of
Polish spaces (see \cite[Chapter 3]{aliprantis06} and \cite[Theorem
6.3.44]{larrecq13}). This property is key for our semantics: it will allow us
to treat solutions of systems of differential equations -- which are continuous
functions and thus live in $\CTop$ -- as measurable functions. Further details
about this crucial aspect are available in the following section.

We proceed by briefly recalling basic results about measures -- a more detailed
description is available for example in \cite[Chapter 10]{aliprantis06},
\cite[Chapter 1]{barthe20}, and~\cite[Chapter 2]{panangaden09}.

\begin{definition} For a measurable space $(X,\Sigma_X)$ a signed measure is a
        function $\mu : \Sigma_X \to \Reals$ such that 
        $\mu(\emptyset) = 0$ and
        moreover it is $\sigma$-additive, \ie\ 
        \[
        \mu \left (\bigcup_{i =1}^{\infty} U_i \right ) = \sum_{i = 1}^{\infty}
        \mu(U_i) 
        \] 
where $(U_i)_{i \in \omega}$ is any family of pairwise disjoint measurable
sets. A signed measure will be simply called a measure if $\mu(U) \geq 0$ for
all measurable subsets $U\subseteq X$. As usual a measure is called a
subdistribution if $\mu(X) \leq 1$ and a distribution if $\mu(X) = 1$.
\end{definition}

For a measurable space $X$ the set of signed measures $\Measu(X)$ forms a
vector space via pointwise extension. It also forms a normed space when
equipped with the total variation norm,
\[
        \lVert \mu \rVert = 
        \sup \left \{ \sum_{i = 1}^n \, | \mu(U_i) | \mid
             \{ U_1, \dots, U_n \} \text{ 
             measurable partition of $X$}
        \right \}
\]
In particular for a measure $\mu$ we have $\lVert \mu \rVert = \mu(X)$. Note
that $\Measu(X)$ is also a Banach space by virtue of the reals numbers forming
a Banach space, specifically the limit of a Cauchy sequence is built via
pointwise extension. 

Take subdistributions $\mu \in \Measu(X)$ and $\nu \in \Measu(Y)$.  There
exists the so-called \emph{tensor} or \emph{product measure} $\mu \otimes \nu
\in \Measu(X \times Y)$, which is defined by the equation $\mu \otimes \nu (U
\times V) = \mu(U) \nu(V)$ on all measurable rectangles $U \times V \in
\Sigma_{X \times Y}$. Specifically the latter extends standardly to all
measurable sets by an appeal to Carathéodory's extension and moreover the
extension is unique (see \eg\ \cite[Lemma 10.33]{aliprantis06}).  Another
useful fact is that for any subdistributions $\mu,\nu \in \Measu(X)$ and $\rho
\in \Measu(Y)$ the equation below holds. 
\[
        (\mu + \nu) \otimes \rho = \mu \otimes \rho + \nu \otimes \rho
\]
The product measure construction just described also applies to countable
families of distributions $(\mu_i)_{i \in \omega}$ in $\Measu(X)$.

Next, for a measurable space $X$ we will denote by $\Giry(X)$ the set of
subdistributions.  The construct $\Giry(-)$ thus defined forms the Giry monad
in $\Meas$ when every $\Giry(X)$ is equipped with the $\sigma$-algebra
generated by the evaluation maps,
\[
        \eval_U : \Giry(X) \to \Reals
        \hspace{1cm} 
        \mu \mapsto \mu(U)
        \hspace{1cm}
        (U \subseteq X \text{ measurable})
\]
\cite{panangaden99}. 
%This last clause is equivalent to stating that for
%any map $f : X \to \Giry(Y)$ between measurable spaces, if $\eval_U \comp f$ is
%measurable for all measurable subsets $U$, then $f$ will be measurable as well.
The respective Kleisli morphisms $f : X \to \Giry(Y)$ are typically called
Markov kernels and their Kleisli extension $f^\star : \Giry(X) \to \Giry(Y)$ is
given by Lebesgue integration~\cite[Chapter 11]{aliprantis06},
\[
        f^\star (\mu) = U \mapsto \int_{x \in X} f(x)(U) \, d \mu(x)
\]
For every measurable space $X$ the unit $\delta : X \to \Giry(X)$ of this monad
is given by the Dirac delta $\delta_x \in \Giry(X)$ with $x \in X$, \ie\
\[
        \delta_x(U) = 
        \begin{cases}
                1 & \text{ if } x \in U \\
                0 & \text{ otherwise }
        \end{cases}
\]
In particular the functorial action of $\Giry: \Meas \to \Meas$ is the
pushforward measure operation. We will often abuse notation by denoting a
linear combination $\sum_i p_i \cdot \delta_{x_i}$ simply by $\sum_i p_i \cdot
x_i$ with $x_i \in X$. 

Let us  recall useful properties about the Giry monad. First it is commutative
when equipped with the double-strength operation $\Giry(X) \times \Giry(Y) \to
\Giry(X \times Y)$ defined via the product measure~\cite{sato18}.  The fact
that such operation is measurable follows from~\cite[Lemma 4.11]{aliprantis06}.
Second for any bounded measurable map $f : X \to \Reals$ and measures $\mu,\nu
\in \Giry(X)$ Lebesgue integration satisfies the conditions, 
\[ 
        \int f d (\mu +
        \nu) = \int f d \mu\  + \int g d \nu \hspace{0.5cm} 
        \int f d (s \cdot \mu)
        = s \cdot \left (\int f d \mu \right ) 
\] 
for any scalar $s \in \Reals$.  Thus we immediately conclude that the Kleisli
extension of a Markov kernel will always be linear. Third if the codomain of
$f$ restricts to $[0,1]$ we obtain,
\[ 
        \int f d \mu \leq \mu (X)
        = \lVert \mu \rVert
\] 
This entails that the Kleisli extension of a Markov kernel will be bounded and
thus continuous (even contractive) w.r.t. the metric induced by the total
variation norm. This provides a number of tools from functional analysis.  For
example one can analyse how $f^\star$ acts on a measure $\mu$ by a series of
approximations $\mu_n$ to $\mu$.  Not only this, the set of maps
$\Meas(\Giry(X), \Giry(Y))$ can be equipped with the metric induced by the
operator norm,
\[
        \lVert T \rVert = \sup
        \left \{
                \lVert T (\mu) \rVert \mid \mu \in \Giry(X),  \lVert \mu \rVert \leq 1
        \right \}
\]
which moves us beyond classical program equivalence by allowing to compare
programs in terms of distances and not just equality (see for example
\cite{dahlqvist23}).  More details about this last aspect will be given later
on.

Another useful fact is that for a given measure $\mu \in \Giry(X)$ any
measurable subset $U \subseteq X$ gives rise to a new measure $\mu(U \cap - )$.
Moreover for any bounded measurable map $f : X \to \Reals$ and measure $\mu\in
\Giry(X)$ we obtain,
\[
        \int_X f  d \mu 
         = \int_U  f  d \mu(U \cap -) + 
        \int_{\overline{U}} f d \mu(\overline{U} \cap -)
\]
where $\overline{U}$ represents the complement of $U$. Yet  another useful
property of the Giry monad is that for every measurable space $X$ the space
$\Giry(X)$ inherits the usual order on the real numbers, via pointwise
extension. What is more, the induced order has a bottom element (the zero-mass
measure) and it is $\omega$-complete, by virtue of the completeness property of
the real numbers. Remarkably, an $\omega$-increasing sequence of measures
$(\mu_n)_{n \in \omega}$ in $\Giry(X)$ is Cauchy and $\sup_{n \in \omega} \mu_n
= \lim_{n \to \infty} \mu_n$, thanks to the monotone convergence theorem (see
\eg\ \cite[Theorem 11.18]{aliprantis06} or \cite[Theorem 3.6]{panangaden09}).
This is helpful to jump between domain theory and functional analysis whenever
necessary. 

Next, observe that the aforementioned order extends to Markov kernels via
pointwise extension. It has a bottom element (the map constant on the zero-mass
measure) and it is $\omega$-complete. The last property follows directly from
the definition of the $\sigma$-algebra of $\Giry(X)$ for every $X$ and
from the fact that the pointwise supremum of real-valued measurable functions
is measurable. Also, it follows from the monotone convergence theorem that the
equation,
\[
        (\sup_{i \in \omega} f_i)^\star = \sup_{i \in \omega} f_i^\star
\]
holds for any increasing sequence $(f_i)_{i \in \omega}$ of Markov kernels.
This will be crucial for the interpretation of while-loops in the following
section.
Finally it follows from the fact that multiplication preserves suprema that,
\[
        (\sup_{i \in \omega} \mu_i) \otimes \nu = \sup_{i \in \omega} \mu_i \otimes \nu
\]
for any increasing sequence of subdistributions $(\mu_i)_{i \in \omega}$ in
$\Giry(X)$ and $\nu \in \Giry(Y)$.

\section{Denotational semantics}
\label{sec:denot}

We now introduce a denotational, measure-theoretic semantics for our stochastic
language. In a nutshell, it extends Kozen's well-known probabilistic
semantics~\cite{kozen79} with a mechanism for handling the time-based
terminations that were described in Section~\ref{sec:lng}. The extension boils
down to the following categorical construction.  

Any object $E$ in a category $\catC$ with binary coproducts induces a monad $E
+ (-)$ which intuitively gives semantics to exception
handling~\cite{moggi:1991}. It follows from the universal property of
coproducts that any monad $T$ in $\catC$ combines with $E + (-)$. In other
words we have a new monad $T \otimes E$ which handles at the same time effects
arising from $T$ and exceptions. Concretely the functorial action of this new
monad is given by $T(E + (-))$, the unit $\eta^{T \otimes E}$ by the
composition $\eta^T \comp \inr : X \to T(E + X)$, and the Kleisli lifting
$(-)^{\star^{T \otimes E}}$ by the equation,
\begin{align}
        \label{eq:combined}
        f^{\star^{T \otimes E}} = [\eta^T \comp \inl, f]^{\star^T}
\end{align}
Time-based terminations will be handled precisely via one such monad $\Giry
\otimes E$ in $\Meas$ -- in other words these terminations are technically seen
as exceptions, in the sense that they also inhibit the execution of subsequent
computations and instead are merely propagated forward along the evaluation.
Specifically the denotation $\sem{\text{\lstinline|p|}}$ of a program
\lstinline|p| will be a Markov kernel $X \to \Giry(E + X)$ in which elements of
$E$ denote time-based terminations. The space $X$ will be in particular the
product $\Reals^n \times \Rz$ whilst $E$ will be $\Reals^n$ (thus analogously
to Section~\ref{sec:lng}, $n$ is the cardinality of our stock of variables and
possible outputs are either elements of $\Reals^n \times \Rz$ or
$\Reals^n$).  Consequently, for every $(\sigma,t) \in \Reals^n \times \Rz$ we
have $\sem{\text{\lstinline|p|}}(\sigma,t)$ as a subdistribution which assigns
probabilities to the outputs of \lstinline|p| w.r.t. time instant $t$ and
initial state $\sigma$. For any given initial state $\sigma \in \Reals^n$, one
can also see a denotation $\sem{\text{\lstinline|p|}}$ as inducing a Markov
process $\sem{\text{\lstinline|p|}}(\sigma,-)$ which intuitively means that the
subdistribution of outputs evolves over time. Note as well that the possibility
of the total mass of $\sem{\text{\lstinline|p|}}(\sigma,t)$ being strictly
lower than $1$ for a given input $(\sigma,t) \in  \Reals^n \times \Rz$ reflects
the possibility of divergence and/or errors in the evaluation of expressions
and Boolean conditions.

Lastly in order to interpret while-loops, we observe that the combined monad
$\Giry \otimes E$ inherits the order of $\Giry$, and moreover,
\begin{align}
        \label{eq:scott}
        (\sup_{i \in \omega} f_i)^{\star^{\Giry \otimes E}} 
        = \sup_{i \in \omega} f_i^{\star^{\Giry \otimes E}}
\end{align}
for any increasing sequence $(f_i)_{i \in \omega}$ of Markov kernels. This last
equation follows from the Scott-continuity of co-pairing on its second
argument. 

We are finally ready to introduce our denotational semantics. It is defined in
Figure~\ref{denot_sem2}, via induction on the syntactic structure of programs.
\begin{figure*}
\begin{minipage}{1\textwidth}
\begin{align*}
        \sem{ \text{\lstinline[mathescape]|diff(e$_1$,...,e$_n$,e)|} } & =
        (\sigma,t) \mapsto
        \begin{cases}
                1 \cdot \phi(\sigma,t)
                & \text{ if } \sem{\text{\lstinline{e}}}(\sigma) > t
                \\
                1 \cdot (\phi(\sigma,\sem{\text{\lstinline{e}}}(\sigma)),  
                t - \sem{\text{\lstinline{e}}}(\sigma))
                & \text{ if } 0 \leq \sem{\text{\lstinline{e}}}(\sigma) \leq t
                \\
                0 
                & \text{ otherwise }
        \end{cases}
        \\
        \sem{ \text{\lstinline{x := e}} } & = (\sigma,t) \mapsto 
        \begin{cases}
                1 \cdot (\sigma[\text{\lstinline{x}} 
                \mapsto \sem{\text{\lstinline{e}}}(\sigma)],t) & \text{ if }
                \sem{\text{\lstinline{e}}}(\sigma) \text{ is well-defined }
                \\
                0 & \text{ otherwise }
        \end{cases}
        \\
        \sem{ \text{\lstinline[mathescape]|x$_i$ := unif(0,1)|} } & = (\sigma,t)
        \mapsto
        \sigma[\text{\lstinline[mathescape]|x$_i$|} 
        \mapsto \lambda] \otimes ( 1 \cdot t )
        \\
        \sem{ \text{\lstinline{if b then p else q}} }  & = (\sigma,t) \mapsto
        \begin{cases}
                \sem{\text{\lstinline|p|}}(\sigma,t)
                &
                \text{ if } \sem{\text{\lstinline|b|}}(\sigma) = \mathtt{tt}
                \\
                \sem{\text{\lstinline|q|}}(\sigma,t)
                &
                \text{ if } \sem{\text{\lstinline|b|}}(\sigma) = \mathtt{ff}
                \\
                0 & \text{ otherwise}
        \end{cases}
        \\
        \sem{\text{\lstinline{p ; q}}} & = 
        \sem{ \text{\lstinline{q}} }^{\star}
                \comp
                \sem{ \text{\lstinline{p}} }
        \\
        \sem{ \text{\lstinline{while b do p}} } & = 
        \mathrm{lfp} \left (
                k \mapsto 
                (\sigma,t) \mapsto
                \begin{cases}
                        k^\star \comp \sem{\text{\lstinline|p|}} (\sigma,t)
                        & \text{ if } \sem{\text{\lstinline|b|}} (\sigma) =
                        \text{\lstinline|tt|}
                        \\
                        1 \cdot (\sigma,t)
                        & \text{ if } \sem{\text{\lstinline|b|}}(\sigma) =
                        \text{\lstinline|ff|}
                        \\
                        0
                        & \text{ otherwise } 
                \end{cases}
               \right )
\end{align*}
\end{minipage}
\caption{Denotational semantics.}
\label{denot_sem2}
\end{figure*}
It assumes, as usual, that the semantics of expressions \lstinline|e| and
Boolean conditions \lstinline|b| are given by \emph{measurable} partial maps
$\sem{\text{\lstinline|e|}}: \Reals^n \xrightharpoonup{\hspace{0.1cm}} \Reals$
and $\sem{\text{\lstinline|b|}}: \Reals^n \xrightharpoonup{\hspace{0.1cm}}
\{\mathtt{tt},\mathtt{ff}\}$. The measurability of each interpretation clause
in Figure~\ref{denot_sem2} is then straightforward to verify. Indeed, the only
somewhat complicated case is the first clause, which crucially relies on two
related properties. First, the fact that every continuous map $\phi : \Reals^n
\times \Rz \to \Reals^n$ is
measurable as a map $B(\phi) : B(\Reals)^n \times B(\Rz) \to B(\Reals)^n$ (which we commented on in the last section).
Second, by an analogous reasoning, the fact that the subtraction $\Reals
\times_\Meas \Reals \to \Reals$ map is measurable. This entails in particular
that the strictly greater relation $(>)$ is measurable as a function $\Reals
\times_\Meas \Reals \to 1 + 1$, by virtue of $(-\infty,0)$ being a measurable
subset of $\Reals$. Next, observe our slight abuse of notation in
$\sigma[\text{\lstinline[mathescape]|x$_i$|} \mapsto \lambda]$ which
abbreviates the product measure,
\[
        1 \cdot \sigma(\text{\lstinline[mathescape]|x$_1$|})
        \, \otimes \,
        \dots
        \, \otimes \,
        1 \cdot \sigma(\text{\lstinline[mathescape]|x$_{i-1}$|})
        \, \otimes \,
        \lambda
        \, \otimes \, 
        1 \cdot \sigma(\text{\lstinline[mathescape]|x$_{i+1}$|})
        \, \otimes \,
        \dots
        \, \otimes \, 
        1 \cdot \sigma(\text{\lstinline[mathescape]|x$_{n}$|})
\]
where $\lambda$ is the uniform distribution on $[0,1]$.  Finally the last
clause interprets while-loops via Kleene's least fixpoint construction. The
fact that the map from which we take the least fixpoint is Scott-continuous
follows straightforwardly from our previous observations and in particular
Equation~\eqref{eq:scott}.

\begin{example}
        Let us revisit Example~\ref{ex:stop}, where we analysed
        the non-terminating program,
\begin{lstlisting}[mathescape]
x := 0 ; while tt $\{$ x++ ; wait 1 $\}$
\end{lstlisting}
        from an operational point of view. Specifically we saw first-hand the rôle
        that the rules $\textbf{(diff-stop$^\to$)}$ and
        $\textbf{(seq-stop$^\to$)}$ take in giving rise to time-based
        terminations. Let us now illustrate this aspect denotationally, \ie\
        how the combined monad $\Giry \otimes E$ handles such
        terminations via its Kleisli composition. First, 
        for simplicity we denote the loop simply by
        \lstinline|p| and the store 
$\sigma : \{$ \lstinline|x| $\} \to \Reals$
that is defined as $\sigma($\lstinline|x|$) = v$ by \lstinline|x| $\mapsto v$.
Also as in Example~\ref{ex:stop} we consider the time instant $1 +
\sfrac{1}{2}$.  Then,
\begin{align*}
        & \sem{\text{\lstinline|x := 0 ; p|}}(\sigma,1 + \sfrac{1}{2})
        \\
        & = \{ \text{ Semantics definition + Monad laws } \}
        \\
        & \sem{\text{\lstinline|p|}}(\text{\lstinline{x}} \mapsto 0,1 + \sfrac{1}{2})
        \\
        & = \{ \text{ Fixpoint equation } \}
        \\
        & \sem{\text{\lstinline|p|}}^\star
        \comp \sem{\text{\lstinline|x++ ; wait 1|}}
        (\text{\lstinline{x}} \mapsto 0,1 + \sfrac{1}{2})
        \\
        & = \{ \text{ Semantics definition + Monad laws } \}
        \\
        & \sem{\text{\lstinline|p|}}(\text{\lstinline{x}} \mapsto 1,\sfrac{1}{2})
        \\
        & = \{ \text{ Fixpoint equation } \}
        \\
        & \sem{\text{\lstinline|p|}}^\star
        \comp \sem{\text{\lstinline|x++ ; wait 1|}}
        (\text{\lstinline{x}} \mapsto 1,\sfrac{1}{2})
        \\
        & = \{ \text{ Semantics definition } \}
        \\
        & \sem{\text{\lstinline|p|}}^\star
        (1 \cdot (\text{\lstinline{x}} \mapsto 2) )
        \\
        & = \{ \text{ Kleisli composition definition (Equation~\eqref{eq:combined}) } \}
        \\
        & 1 \cdot (\text{\lstinline{x}} \mapsto 2)
\end{align*}
\end{example}

The section's remainder is devoted to proving adequacy of our denotational
semantics w.r.t. the operational counterpart that was described in
Section~\ref{sec:lng}.  In order to achieve this -- and following the same
steps as~\cite{kozen79} -- we will recur to an auxiliary semantics, which
reframes our operational semantics as a \emph{measurable} map. We will see that
such is necessary in order to sensibly extend the input-output relation induced
by the operational semantics to a probabilistic setting -- and thus
subsequently connect the latter to the denotational semantics, as intended.
Let us thus proceed by presenting this auxiliary semantics.

We will need some preliminaries. Recall that any object $E$ in a category
$\catC$ with binary coproducts induces a monad $E + (-)$. We take the
particular case in which $\catC = \Meas$ and $E = 1$. We then equip the Kleisli
morphisms of this monad with the partial order that is induced from the notion
of a flat domain~\cite{GHK+80}. It is easy to see that this order is
$\omega$-complete by an appeal to the following theorem.

\begin{theorem}
Consider an increasing sequence of measurable maps $(f_i)_{i \in \omega} : X \to
1 + Y$. Their supremum w.r.t. the order of flat domains is also measurable.
\end{theorem}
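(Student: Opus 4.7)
The plan is to unpack the supremum pointwise using the flat-domain structure, then verify measurability by checking preimages of a small generating family of measurable sets in $1 + Y$.

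First, I would spell out what $f \colon X \to 1 + Y$ defined by $f(x) = \sup_{i \in \omega} f_i(x)$ actually looks like. In the flat domain on $1 + Y$ the bottom element is $\inl(\star)$ and distinct elements of $\inr(Y)$ are incomparable, so an increasing chain $f_0(x) \sqsubseteq f_1(x) \sqsubseteq \dots$ is either constantly $\inl(\star)$ or eventually stabilises at some $\inr(y)$. Consequently,
\[
  f(x) = \inr(y) \iff \exists i \in \omega.\, f_i(x) = \inr(y),
  \qquad
  f(x) = \inl(\star) \iff \forall i \in \omega.\, f_i(x) = \inl(\star).
\]

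Next, I would recall that the $\sigma$-algebra on $1 + Y$ is generated by the singleton $\{\inl(\star)\}$ together with the sets $\inr(B)$ for $B \in \Sigma_Y$; so it suffices to show that the $f$-preimages of these two kinds of sets are measurable in $X$. For any measurable $B \subseteq Y$, the characterisation above gives
\[
  f^{-1}(\inr(B)) \;=\; \bigcup_{i \in \omega} f_i^{-1}(\inr(B)),
\]
which is a countable union of measurable sets since every $f_i$ is measurable. Symmetrically,
\[
  f^{-1}(\{\inl(\star)\}) \;=\; \bigcap_{i \in \omega} f_i^{-1}(\{\inl(\star)\}),
\]
a countable intersection of measurable sets. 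Measurability of $f$ then follows by the standard argument that it is enough to check preimages of a generator of the codomain $\sigma$-algebra.

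I do not anticipate a real obstacle here; the statement is essentially a book-keeping lemma about flat domains in $\Meas$. The only point requiring care is the observation that chains in the flat order either stabilise or are trivial, which is what lets the supremum be described in the uniform ``$\exists i$ / $\forall i$'' form and hence be captured by countable Boolean combinations of preimages under the individual $f_i$.
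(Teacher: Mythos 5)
Your proposal is correct and follows essentially the same route as the paper's own proof: both characterise the flat-domain supremum pointwise via the ``$\exists i$ / $\forall i$'' dichotomy and then express $f^{-1}(\{\inl(\star)\})$ as a countable intersection and $f^{-1}(\inr(B))$ as a countable union of preimages under the $f_i$. Your explicit remark that it suffices to check preimages on a generating family of the coproduct $\sigma$-algebra is a minor elaboration the paper leaves implicit.
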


\begin{proof}
        Recall that the supremum $f : X \to 1 + Y$ of $(f_i)_{i \in \omega}$ is
        defined by,
        \[
                f(x) = \begin{cases}
                        \inr(y) & \text{ if }\, \exists i \in \omega. \, 
                        f_i(x) = \inr(y) \text{ for some } y \in Y
                        \\
                        \inl(\ast) & \text{ otherwise }
                \end{cases}
        \]
        (see details for example in~\cite{GHK+80}). Now, we need to show that
        $f$ is measurable, or in other words that both pre-images
        $f^{-1}(\{\inl(\ast)\})$ and $f^{-1}( \inr[U])$ are measurable (where
        $U$ is any measurable subset of $Y$). For the first case we reason,
        \begin{align*}
                f^{-1}(\{ \inl(\ast) \}) & = 
                \{ x \in X \mid f(x) = \inl(\ast) \}
                \\
                & = \{ x \in X \mid \forall i \in \omega. \, f_i(x) = \inl(\ast) \}
                \\
                & = \bigcap_{i = 0}^{\infty}\, f_i^{-1}(\{ \inl (\ast) \})
        \end{align*}
        and this intersection must be measurable because $\sigma$-algebras are
        closed under countable intersections. The second case follows from 
        an analogous reasoning and the fact that $\sigma$-algebras are closed
        under countable unions.
\end{proof}
Observe then that for any increasing sequence of measurable
maps $(f_i)_{i \in \omega} : X \to 1 + Y$ we have,
\[
        (\sup_{i \in \omega} f_i)^\star = \sup_{i \in \omega} f^\star_i
\]
thanks to Scott-continuity of co-pairing on its second argument.  Denoting this
monad by $(-)_\bot$, observe that its Kleisli category $\Meas_{(-)_\bot}$ is
isomorphic to $\PMeas$, \ie\ that of measurable spaces and partial measurable
maps. $\PMeas$ has binary coproducts by general categorical
results~\cite{moggi:1989}. We then take as the interpretation domain of our
auxiliary semantics the monad $E + (-)$ in $\PMeas$ where $E = \Reals^n$. In
order to interpret while-loops via this monad, note that it inherits the
$\omega$-complete order of $(-)_\bot$ and furthermore,
\begin{align}
        \label{eq:scott2}
        (\sup_{i \in \omega} f_i)^{\star} 
        = \sup_{i \in \omega} f_i^{\star}
\end{align}
The operational semantics in functional form is now presented in
Figure~\ref{denot_sem1}. Observe that, contrary to the denotational semantics,
it involves an entropy source. The measurability of each interpretation clause
is once again straightforward to verify, and similarly for the fact that the
map from which we take the least fixpoint is Scott-continuous, thanks to
Equation~\eqref{eq:scott2}. The symbol $\ast$ represents undefinedness. 

Finally the following theorem establishes the aforementioned connection
between the operational semantics in Figure~\ref{fig:big-step2} and the
functional semantics that we have just presented.
\begin{figure*}
\begin{minipage}{1\textwidth}
\begin{align*}
        \asem{  \text{\lstinline[mathescape]|diff(e$_1$,...,e$_n$,e)|}  } & =
        (\sigma,t,s) \mapsto
        \begin{cases}
                \phi(\sigma,t)
                & \text{ if } \sem{\text{\lstinline{e}}}(\sigma) > t
                \\
                (\phi(\sigma,\sem{\text{\lstinline{e}}}(\sigma)),  
                t - \sem{\text{\lstinline{e}}}(\sigma), s)
                & \text{ if } 0 \leq \sem{\text{\lstinline{e}}}(\sigma) \leq t
                \\
                \ast
                & \text{ otherwise }
        \end{cases}
        \\
        \asem{ \text{\lstinline{x := e}} } & = (\sigma,t,s) \mapsto 
        \begin{cases}
                (\sigma[\text{\lstinline{x}} 
                \mapsto \sem{\text{\lstinline{e}}}(\sigma)],t,s) & \text{ if }
                \sem{\text{\lstinline{e}}}(\sigma) \text{ is well-defined }
                \\
                \ast & \text{ otherwise }
        \end{cases}
        \\
        \asem{ \text{\lstinline[mathescape]|x$_i$ := unif(0,1)|} } & = (\sigma,t,(h:s))
        \mapsto
        (\sigma[\text{\lstinline[mathescape]|x$_i$|} 
        \mapsto h],t,s) 
        \\
        \asem{ \text{\lstinline{if b then p else q}} }  & = (\sigma,t,s) \mapsto
        \begin{cases}
                \asem{\text{\lstinline|p|}}\ (\sigma,t,s)
                &
                \text{ if } \sem{\text{\lstinline|b|}}(\sigma) = \mathtt{tt}
                \\
                \asem{\text{\lstinline|q|}}\ (\sigma,t,s)
                &
                \text{ if } \sem{\text{\lstinline|b|}}(\sigma) = \mathtt{ff}
                \\
                \ast & \text{ otherwise}
        \end{cases}
        \\
        \asem{\text{\lstinline{p ; q}}} & = 
        \asem{ \text{\lstinline{q}} }^{\star}
                \comp
                \asem{ \text{\lstinline{p}} }
        \\
        \asem{ \text{\lstinline{while b do p}} } & = 
        \mathrm{lfp} \left (
                k \mapsto 
                (\sigma,t,s) \mapsto
                \begin{cases}
                        k^\star \comp \asem{\text{\lstinline|p|}}\ (\sigma,t,s)
                        & \text{ if } \sem{\text{\lstinline|b|}}(\sigma) =
                        \text{\lstinline|tt|}
                        \\
                        (\sigma,t,s)
                        & \text{ if } \sem{\text{\lstinline|b|}}(\sigma) =
                        \text{\lstinline|ff|}
                        \\
                        \ast
                        & \text{ otherwise } 
                \end{cases}
               \right )
\end{align*}
\caption{Functional version of the big-step semantics in Figure~\ref{fig:big-step2}.}
\label{denot_sem1}
\end{minipage}
\end{figure*}
\begin{theorem}
        Consider a program \textnormal{\lstinline{p}}, an environment $\sigma$, a time
        instant $t$, and an entropy source $s$. Then the following implications
        hold:
        \begin{align*}
          & \textnormal{\lstinline{p}} \sep \sigma \sep  t \sep  s \bsto 
          \skp \sep \sigma' \sep t' \sep s' 
          & \Rightarrow \hspace{0.6cm} 
          & \asem{\textnormal{\lstinline{p}}}(\sigma,t,s) = (\sigma',t',s')
          \\
          & \textnormal{\lstinline{p}} \sep \sigma \sep  t \sep  s \bsto  \sigma' 
          & \Rightarrow \hspace{0.6cm} 
          & \asem{\textnormal{\lstinline{p}}}(\sigma,t,s) = \sigma'
          \\
          & \textnormal{\lstinline{p}} \sep \sigma \sep  t \sep  s \bsto \err 
          & \Rightarrow \hspace{0.6cm} 
          & \asem{\textnormal{\lstinline{p}}}(\sigma,t,s) = \ast
        \end{align*}
        Moreover the following implications also hold:
        \begin{align*}
          & \asem{\textnormal{\lstinline{p}}}\ (\sigma,t,s) = (\sigma',t',s')
          & \Rightarrow \hspace{0.40cm}
          &\textnormal{\lstinline{p}} \sep \sigma \sep  t \sep  s \bsto 
          \skp \sep \sigma' \sep t' \sep s' 
          \\
          & \asem{\textnormal{\lstinline{p}}}\ (\sigma,t,s) = \sigma'
          & \Rightarrow \hspace{0.4cm}
          & \textnormal{\lstinline{p}} \sep \sigma \sep  t \sep  s \bsto  \sigma' 
        \end{align*}
\end{theorem}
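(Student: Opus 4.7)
The plan is to prove the forward direction (the first three implications) by rule induction on the big-step derivation $\text{\lstinline|p|} \sep \sigma \sep t \sep s \bsto v$, and the backward direction (the last two implications) by a nested induction that unfolds while-loops via the Kleene chain of approximants produced by the fixpoint construction in Figure~\ref{denot_sem1}.

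For the forward direction, each axiom of Figure~\ref{fig:big-step2} matches a single clause in Figure~\ref{denot_sem1} by direct case analysis on the side conditions $\sem{\text{\lstinline|e|}}(\sigma)$ and $\sem{\text{\lstinline|b|}}(\sigma)$. The substantive cases are the inductive rules. For $\textbf{(seq-skip)}$ I would apply the induction hypothesis to both premises, obtaining $\asem{\text{\lstinline|p|}}(\sigma,t,s) = \inr(\sigma',t',s')$ and $\asem{\text{\lstinline|q|}}(\sigma',t',s') = v$, and then observe that the Kleisli composition formula (Equation~\eqref{eq:combined}) instantiated in the $E+(-)$-monad on $\PMeas$ yields $\asem{\text{\lstinline|q|}}^{\star} \comp \asem{\text{\lstinline|p|}}(\sigma,t,s) = v$. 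For $\textbf{(seq-stop)}$ the induction hypothesis gives $\asem{\text{\lstinline|p|}}(\sigma,t,s) = \inl(\sigma')$, which the same formula propagates unchanged, matching the way time-based terminations are treated as exceptions. For $\textbf{(seq-err)}$ the partiality of the underlying flat monad absorbs $\ast$. The while cases are handled by unfolding the fixpoint equation once and applying the induction hypothesis to the premise of $\textbf{(wh-true)}$, or by directly reading off the \textsf{ff}-clause in the case of $\textbf{(wh-false)}$.

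For the backward direction the interpretation of a while-loop is given only as the least fixpoint $\mathrm{lfp}(\Phi)$, so a value $\asem{\text{\lstinline|while b do p|}}(\sigma,t,s)$ arises as the supremum of the Kleene chain $(\Phi^n(\bot))_{n \in \omega}$. Since the underlying order is flat, if the supremum is defined at $(\sigma,t,s)$ with value $v$, then there is a least $n$ such that $\Phi^n(\bot)(\sigma,t,s) = v$ already. I would therefore proceed by well-founded induction on the lexicographic pair consisting of the program's syntactic size and this minimal iteration index $n$. The non-loop constructs are then straightforward, effectively mirroring the forward direction.

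The main obstacle is precisely the while-case of this backward direction. After unfolding one layer of $\Phi$, when $\sem{\text{\lstinline|b|}}(\sigma) = \mathtt{tt}$ the value is $\Phi^{n-1}(\bot)^{\star} \comp \asem{\text{\lstinline|p|}}(\sigma,t,s)$, and I would need an auxiliary lemma about Kleisli composition in the $E+(-)$-monad on $\PMeas$: if $(g^{\star} \comp f)(x)$ is defined and equals $v$, then either $f(x) = \inl(\sigma')$ with $v = \inl(\sigma')$, or $f(x) = \inr(y)$ with $g(y) = v$. These two cases mirror rules $\textbf{(seq-stop)}$ and $\textbf{(seq-skip)}$ on the operational side. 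Combined with the structural induction hypothesis on \lstinline|p| (to reconstruct a big-step derivation of the loop body) and the iteration-index hypothesis on the strictly smaller iterate $\Phi^{n-1}(\bot)$ (to reconstruct a derivation of the recursive call), this yields a derivation via $\textbf{(wh-true)}$ followed by $\textbf{(seq-skip)}$ or $\textbf{(seq-stop)}$ as appropriate.
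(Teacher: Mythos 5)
Your proposal is correct and follows essentially the same route as the paper: the forward direction by induction on big-step derivation trees with a case analysis of Kleisli composition in the $E+(-)$ monad over $\PMeas$ and one unfolding of the fixpoint equation, and the backward direction by structural induction on programs combined with an inner induction over the Kleene approximants of the while-loop, using flatness of the order to reduce the supremum to a finite iterate. Your lexicographic packaging of the two inductions and the explicit decomposition lemma for $(g^\star \comp f)(x)$ are just more detailed renderings of what the paper calls ``a close inspection of Kleisli composition'' and its auxiliary claim about the approximants $(f_i)_{i\in\omega}$.
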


\begin{proof}
The proof is laborious but straightforward. The first three implications follow
by induction over the big-step derivations trees, a close inspection of Kleisli
composition, and the fixpoint equation concerning while-loops. The last two
implications follow by structural induction on programs, again a close
inspection of Kleisli composition, and the proof that for all $i \in \omega$
the following implications hold,
\begin{align*}
          & f_i (\sigma,t,s) = (\sigma',t',s')
          & \Rightarrow \hspace{0.40cm}
          &\text{\lstinline|while b do p|} \sep \sigma \sep  t \sep  s \bsto 
          \skp \sep \sigma' \sep t' \sep s' 
          \\
          & f_i (\sigma,t,s) = \sigma'
          & \Rightarrow \hspace{0.4cm}
          & \text{\lstinline|while b do p|} \sep \sigma \sep  t \sep  s \bsto  \sigma' 
        \end{align*}
        where the maps $(f_i)_{i \in \omega} : X \pmap E + X$ are the
        components of the supremum involved in Kleene's least fixpoint
        construction. The proof for these two last implications is obtained
        via induction over the natural numbers.
\end{proof}

Note that the previous theorem excludes the implication,
\[
        \asem{\text{\lstinline|p|}}\ (\sigma,t,s) = \ast \qquad \Rightarrow \qquad
        \text{\lstinline|p|} \sep \sigma \sep t \sep s \bsto \err
\]
This is simply because the equation $\asem{\text{\lstinline|p|}}\ (\sigma,t,s)
= \ast$ may arise from divergence (and not necessarily from an evaluation
error) which the operational semantics cannot track. On the other hand, the
theorem entails that if $\asem{\text{\lstinline|p|}}\ (\sigma,t,s) = \ast$ and
the operational semantics evaluates the tuple $\text{\lstinline|p|} \sep \sigma
\sep t \sep s$ to a value then this value is necessarily $\err$.

We are now ready to extend the auxiliary semantics $\asem{-}$ to a
probabilistic setting. Such hinges on the fact that the pushforward measure
construction $(-)_\star$ is functorial on partial measurable maps.  This yields
the composite functor,
\[
        \xymatrix{
                \PMeas_E \ar[r]^{(-)^\star} &
                \PMeas \ar[r]^{(-)_\star} &
                \Meas
        }
\]
where $\PMeas_E$ is the Kleisli category of the monad  $E + (-)$ in $\PMeas$
and $(-)^\star$ is the respective Kleisli extension. More concretely we obtain
the inference rule,
\[
        \infer{
                (f^\star)_\star : \Giry(E + X) \to \Giry(E + Y)
        }{
                f : X \xrightharpoonup{\hspace{0.1cm}} E + Y
        }
\]
which, intuitively, entails that for any $\asem{\text{\lstinline|p|}}$ the function
$(\asem{\text{\lstinline|p|}}^\star)_\star$ moves probability masses of the
measure given as input according to the operational semantics. This is
analogous to what happens with the denotational semantics; and our adequacy
theorem will render such analogy precise.

We are ready to formulate our adequacy theorem. In order to keep notation
simple we will abbreviate the space $\Reals^n \times \Rz$ to $X$ and, as
before, the space $\Reals^n$ to $E$. Note that any measure $\mu \in
\Giry(E + X)$ can be decomposed into $\mu_{\mid E} \in \Giry(E)$ and $\mu_{\mid
X} \in \Giry(X)$, where $\mu_{\mid E}(U) = \mu(U + \emptyset)$ for all
measurable subsets $U \subseteq E$ and analogously for $\mu_{\mid X}$. These
restriction operations are linear, commute with suprema, and furthermore
$\mu = \inl_\star (\mu_{\mid E}) + \inr_\star (\mu_{\mid X})$.  We
will often abuse notation and elide the left and right injections in the
previous measure decomposition.

\begin{theorem}[Adequacy]
        Consider a program \lstinline|p| and let $\lambda$ be the uniform
        distribution on $[0,1]$. For any measure $\mu \in \Giry(E + X)$, the
        following equation holds.
        \begin{align*}
        & \, (\asem{\text{\lstinline|p|}}^\star)_\star \Big ( \mu_{\mid E}  +
        \mu_{\mid X} \otimes \lambda^{\otimes \omega}  \Big )
        =
        \sem{\text{\lstinline|p|}}^\star (\mu)_{\mid E}
        + 
        \sem{\text{\lstinline|p|}}^\star (\mu)_{\mid X} \otimes
        \lambda^{\otimes \omega}
        \end{align*}
        In particular the equation below holds for all measurable subsets $U
        \subseteq E$ and $V \subseteq X$.
        \begin{align*}
        & \, (\asem{\text{\lstinline|p|}}^\star)_\star
        \Big ( \mu_{\mid E}  + \mu_{\mid X} \otimes
        \lambda^{\otimes \omega} \Big )(U + V \times [0,1]^\omega)
        = \sem{\text{\lstinline|p|}}^\star (\mu) (U + V)
        \end{align*}
\end{theorem}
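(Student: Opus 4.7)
The plan is to proceed by structural induction on \lstinline|p|, after first reducing the claim to input measures concentrated on $X$. Both sides of the desired equation are linear in $\mu$ and act as the identity on the $E$-component: this is because for the combined monad $\Giry \otimes E$, Equation~\eqref{eq:combined} gives $f^\star = [\eta \comp \inl, f]^\star$, so $\sem{\text{\lstinline|p|}}^\star$ leaves $\mu_{\mid E}$ intact, and the analogous fact holds for $\asem{\text{\lstinline|p|}}^\star$ in $\PMeas$. It therefore suffices to verify the equation for $\mu$ supported on $X$ and then reassemble the two components.

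For the atomic programs, I would verify the equation directly. Assignment and the differential construct are essentially deterministic: their functional semantics are either total or undefined, and the pushforward of a Dirac mass along such a map matches the Dirac measure prescribed by the denotational clause (or zero if the map is undefined there). The crucial base case is \lstinline[mathescape]|x$_i$ := unif(0,1)|, which hinges on the canonical isomorphism $[0,1]^\omega \cong [0,1] \times [0,1]^\omega$, together with the identity $\lambda^{\otimes \omega} = \lambda \otimes \lambda^{\otimes \omega}$ and Fubini's theorem; the head-consuming map then transports $\nu \otimes \lambda \otimes \lambda^{\otimes \omega}$ to $\nu[\text{\lstinline[mathescape]|x$_i$|} \mapsto \lambda] \otimes \lambda^{\otimes \omega}$, matching the denotational clause.

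For sequential composition, I would invoke the monad law $\sem{\text{\lstinline|p;q|}}^\star = \sem{\text{\lstinline|q|}}^\star \comp \sem{\text{\lstinline|p|}}^\star$ (and analogously for $\asem{-}$), and apply the IH twice — once for \lstinline|p| on $\mu$ and once for \lstinline|q| on $\sem{\text{\lstinline|p|}}^\star(\mu)$, whose $E$-part is undisturbed by $\asem{\text{\lstinline|q|}}^\star$ and whose $X$-part flows through the same pattern. The conditional case reduces, using measurability of $\sem{\text{\lstinline|b|}}$, to restricting $\mu$ to the measurable sets $\sem{\text{\lstinline|b|}}^{-1}(\mathtt{tt})$ and $\sem{\text{\lstinline|b|}}^{-1}(\mathtt{ff})$ and then applying the IH on each branch.

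The main obstacle is the while-loop case. Both $\sem{\text{\lstinline|while b do p|}}$ and $\asem{\text{\lstinline|while b do p|}}$ are least fixpoints of Scott-continuous operators computed by Kleene iteration. I would set up an inner induction on the approximant index $i \in \omega$, showing that the $i$-th iterates on the two sides satisfy the desired equation, using the outer IH on \lstinline|p| and the inner IH at step $i$. Passing to the supremum then combines three Scott-continuity facts: of Kleisli extension on both sides (Equations~\eqref{eq:scott} and~\eqref{eq:scott2}); of the pushforward operation $(-)_\star$, which follows because the pointwise flat-order supremum of an $\omega$-chain of partial measurable maps commutes with pre-images of measurable sets; and of the tensor $(-) \otimes \lambda^{\otimes \omega}$, as recalled in Section~\ref{sec:bck}. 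The delicate point is verifying that these suprema commute correctly with the decomposition $\mu = \mu_{\mid E} + \mu_{\mid X}$, which relies on the linearity and supremum-preservation of the restriction operations $(-)_{\mid E}$ and $(-)_{\mid X}$ noted immediately before the theorem statement.
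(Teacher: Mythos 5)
Your proposal is correct and follows essentially the same route as the paper's proof: structural induction with direct verification of the atomic cases, Kleisli/functorial laws for sequencing, decomposition of $\mu$ along $\sem{\text{\lstinline|b|}}$ for conditionals, and an inner induction on the Kleene approximants followed by a passage to suprema (via Equations~\eqref{eq:scott} and~\eqref{eq:scott2}, continuity of the pushforward, and distributivity of the tensor over suprema) for while-loops. The extra detail you supply on the sampling base case (the head--tail decomposition $\lambda^{\otimes \omega} = \lambda \otimes \lambda^{\otimes \omega}$) is exactly what the paper leaves implicit in its ``laborious but straightforward'' base cases.
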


\begin{proof}
       The proof is obtained via structural induction. The base cases follow
       straightforwardly although laborious. The case of sequential composition
       also follows straightforwardly, thanks to the functorial laws and the
       Kleisli extension laws. 
        We then focus on the case of conditionals. We will need to decompose
        the measure $\mu$ in the cases that it satisfies and does not
        satisfy \lstinline|b|: \ie\ we will denote $\mu(\emptyset +
        \sem{\text{\lstinline|b|}} \cap -)$ by $\nu$ and $\mu(\emptyset +
        \overline{\sem{\text{\lstinline|b|}}} \cap -)$ by $\rho$, where 
        slightly overloading notation we set $\sem{\text{\lstinline|b|}} = \{
                (\sigma,t) \in X \mid \sem{\text{\lstinline|b|}}(\sigma) =
                \mathtt{tt} \}$. Then,
        \begin{align*}
        & \, (\asem{\text{\lstinline|if b then p else q|}}^\star)_\star
                \Big (
                        \mu_{\mid E} + 
                                \mu_{\mid X} \otimes \lambda^{\otimes \omega}
                \Big )
        \\
        & = \big \{ \mu_{\mid X} = (\nu + \rho)_{\mid X} \big \}
        \\
        & \, (\asem{\text{\lstinline|if b then p else q|}}^\star)_\star
                \Big (
                        \mu_{\mid E} + 
                                (\nu + \rho)_{\mid X} \otimes \lambda^{\otimes \omega}
                \Big )
        \\
        & = \big \{ \text{ Addition of measures distributes over tensor } \big \}
        \\
        & \,  (\asem{\text{\lstinline|if b then p else q|}}^\star)_\star
                \Big (
                         \mu_{\mid E} + 
                                \nu_{\mid X} \otimes \lambda^{\otimes \omega}
                                + 
                                \rho_{\mid X} \otimes \lambda^{\otimes \omega}
                \Big )
        \\
        & = \big \{ \text{ Linearity + Semantics definition } \big \}
        \\
        & \,  \mu_{\mid E}
        + (\asem{\text{\lstinline|p|}}^\star)_\star
                \Big (
                        \nu_{\mid X} \otimes \lambda^{\otimes \omega}
                \Big )
          + (\asem{\text{\lstinline|q|}}^\star)_\star
                \Big (
                        \rho_{\mid X} \otimes \lambda^{\otimes \omega}
                \Big )
        \\
        & = \big \{ \text{ Induction hypothesis }\big \}
        \\
        & \,  \mu_{\mid E}
        +
        \sem{\text{\lstinline|p|}}^\star(\nu)_{\mid E} 
        +  
        \sem{\text{\lstinline|p|}}^\star(\nu)_{\mid X} 
                \otimes \lambda^{\otimes \omega}
        +
        \sem{\text{\lstinline|q|}}^\star(\rho)_{\mid E} 
        + 
         \sem{\text{\lstinline|q|}}^\star(\rho)_{\mid X} 
                \otimes \lambda^{\otimes \omega}
        \\
        & = \big \{ \text{ Addition of measures distributes over tensor } \big \}
        \\
        & \,  \mu_{\mid E}
        +
                (\sem{\text{\lstinline|p|}}^\star(\nu)
                +
                \sem{\text{\lstinline|q|}}^\star(\rho))_{\mid E}
        +  
        (\sem{\text{\lstinline|p|}}^\star(\nu)
               + 
              \sem{\text{\lstinline|q|}}^\star(\rho)
        )_{\mid X}
                \otimes \lambda^{\otimes \omega}
        \\
        & = \big \{ \text{ Semantics definition } \big \}
        \\
        & \,
        \sem{\text{\lstinline|if b then p else q|}}^\star(\mu)_{\mid E} 
        +  
        (\sem{\text{\lstinline|p|}}^\star(\nu)
               + 
              \sem{\text{\lstinline|q|}}^\star(\rho))_{\mid X}
                \otimes \lambda^{\otimes \omega}
        \\
        & = \big \{ \text{ Semantics definition } \big \}
        \\
        & \, 
        \sem{\text{\lstinline|if b then p else q|}}^\star(\mu)_{\mid E} 
        +  
        \sem{\text{\lstinline|if b then p else q|}}^\star(\mu)_{\mid X}
                \otimes \lambda^{\otimes \omega}
        \end{align*}
        Lastly we focus on the case of while-loops. Thus let $(f_i)_{i \in \omega}$ be
        the family of maps involved in Kleene's fixpoint construction w.r.t.
        the semantics $\asem{-}$ and analogously for $(g_i)_{i \in \omega}$ and
        the semantics $\sem{-}$. We need to show that for all $i \in \omega$,
        \begin{align}
                \label{eq:lemma}
        & \, (f_i^\star)_\star \Big ( \mu_{\mid E}  +
        \mu_{\mid X} \otimes \lambda^{\otimes \omega}  \Big )
        =
         g_i^\star (\mu)_{\mid E}
        + 
        g_i^\star (\mu)_{\mid X} \otimes
        \lambda^{\otimes \omega}
        \end{align}
        This is obtained via induction over the natural numbers. Note that the
        reasoning is similar to the case concerning conditionals so
        we omit this step. Then,
        \begin{align*}
               & \, (\asem{\text{\lstinline|while b do p|}}^\star)_\star 
               (\mu_{\mid E} + \mu_{\mid X} \otimes \lambda^{\otimes \omega})
               \\
               & = \big \{ \text{ Semantics definition } \big \}
               \\
               & \, ((\sup_{i \in \omega} f_i)^\star)_\star 
               (\mu_{\mid E} + \mu_{\mid X} \otimes \lambda^{\otimes \omega})
               \\
               & = \big \{ \text{ Equation~\eqref{eq:scott2} } \big \}
               \\
               & \, ((\sup_{i \in \omega} f_i^\star))_\star 
               (\mu_{\mid E} + \mu_{\mid X} \otimes \lambda^{\otimes \omega})
               \\
               & = \big \{ \text{ Monotone convergence theorem } \big \}
               \\
               & \, (\sup_{i \in \omega} (f_i^\star)_\star) 
               (\mu_{\mid E} + \mu_{\mid X} \otimes \lambda^{\otimes \omega})
               \\
               & = \big \{ \text{ Equation~\eqref{eq:lemma}  } \big \}
               \\ 
               & \,
               \sup_{i \in \omega} g_i^\star (\mu)_{\mid E} 
               +
               g_i^\star(\mu)_{\mid X} \otimes
               \lambda^{\otimes \omega}
               \\
               & = \big \{ \text{Addition commutes with sup. +  
               Tensor distribute over sup. } \big \} 
               \\
               & \,
               \sup_{i \in \omega} g_i^\star (\mu)_{\mid E} 
               +
               \Big ( \sup_{i \in \omega} g_i^\star(\mu)_{\mid X} \Big ) \otimes
               \lambda^{\otimes \omega}
               \\
               & = \big \{ \text{ Equation~\eqref{eq:scott} + Semantics definition } \big \}
               \\
               & \,
               \sem{\text{\lstinline|while b do p|}}^\star(\mu)_{\mid E}
               + 
               \sem{\text{\lstinline|while b do p|}}^\star(\mu)_{\mid X} \otimes
               \lambda^{\otimes \omega}
        \end{align*}
\end{proof}

\section{Conclusions and future work}
\label{sec:conc}

This paper provides a basis towards a programming framework of stochastic
hybrid systems. It is rooted not only on an operational semantics, which we
used in the implementation of an interpreter, but also on a compositional,
measure-theoretic counterpart, with which one can formally reason about program
equivalence and approximating behaviour, among other things. These
contributions open up several interesting research lines. We briefly detail
next the ones we are currently exploring. 

First, the fact that we committed ourselves to a denotational semantics based
on monads will now allow us to capitalise on the more general, extensive theory
of monad-based program semantics. This includes for example the extension of our
semantics with additional computational effects~\cite{moggi:1989}, with
higher-order features and different evaluation
mechanisms~\cite{levy22,kavvos25}, and corresponding logics as well as
predicate transformer perspectives~\cite{goncharov13,hasuo16}. We are already
working on the last two topics~\cite{goncharov13,hasuo16}, for not only they
potentially offer a complementary tool in our framework of stochastic hybrid
programs they would also facilitate a more natural connection between our work
and previous results on deductive verification of stochastic hybrid
systems~\cite{peng15,platzer11}.

Second, we would like to add new probabilistic constructs to our language, most
notably conditioning~\cite{lee19,dahlqvist19} and stochastic differential
equations which have multiple uses in hybrid systems
theory~\cite{platzer11,peng15}.

Third, recall from Section~\ref{sec:bck} and Section~\ref{sec:denot} that in
our semantics program denotations are \emph{contractive} operators $\Giry(E +
X) \to \Giry(E + X)$. Thus following the observations
in~\cite{dahlqvist23,dahlqvist23b}, such forms the basis of a corresponding
theory of \emph{metric} program equivalence.  Concretely, instead of
comparing two program denotations in terms of \emph{equality} we are able
to systematically compare them in terms of distances. In the setting of
stochastic hybrid programming this is a much more desirable approach, as in
practice it is unrealistic to expect that two programs match their outputs with
exact precision. Not only this, reasoning about program distances is extremely
important for computationally simulating such programs, since we are limited to
certain finite precision aspects and thus can frequently only approximate
idealised behaviours. 

The works~\cite{dahlqvist23,dahlqvist23b} can inclusively be used as basis for
a deductive metric equational system w.r.t. our programming language. Very briefly,
the corresponding metric $d(-,=)$ would be induced by the operator norm in
conjunction with the total variation norm (both are detailed in
Section~\ref{sec:bck}). Then~\cite{dahlqvist23,dahlqvist23b} would lead us on the
analysis of how the different program constructs interact with this metric. For
example it is immediate from the \emph{op. cit.} that for any
programs, \lstinline|p|, \lstinline|p'| and \lstinline|q|, \lstinline|q'|, the
following rule holds:
\[
        \infer{
                d(\sem{\text{\lstinline|p ; q|}},
                \sem{\text{\lstinline|p' ; q'|}}) 
                \leq \epsilon_1 + \epsilon_2
        }
        {
                d(\sem{\text{\lstinline|p|}},\sem{\text{\lstinline|p'|}}) 
                \leq \epsilon_1
                \qquad
                d(\sem{\text{\lstinline|q|}},\sem{\text{\lstinline|q'|}}) 
                \leq \epsilon_2
        }
\]
We leave a full acount of such a metric equational system to future work.

%%
%% The acknowledgments section is defined using the "acks" environment
%% (and NOT an unnumbered section). This ensures the proper
%% identification of the section in the article metadata, and the
%% consistent spelling of the heading.
\begin{acks}
        This work was financed by National Funds through FCT/MCTES -- \emph{Fundação para a
                Ciência e a Tecnologia, I.P.} (Portuguese Foundation for Science and
        Technology) within project IBEX, with reference
        10.54499/PTDC/CCI-COM/4280/2021.
        It was further supported by national funds through FCT/MCTES %(Portuguese Foundation for Science and Technology), 
        within the CISTER Research Unit (UIDP/UIDB/04234/2020) and under the project
Intelligent Systems Associate Laboratory -- LASI (LA/P/0104/2020).
        % LASI-Laboratório Associado de Sistemas Inteligentes -- LA/P/0104/2020
        % the EU/Next Generation, within the Recovery and Resilience Plan, within project Route 25 (TRB/2022/00061 – C645463824-00000063).
        Finally we are also thankful for the reviewer's helpful feedback.
\end{acks}

%%
%% The next two lines define the bibliography style to be used, and
%% the bibliography file.
\bibliographystyle{ACM-Reference-Format}
\bibliography{biblio}

\end{document}